\def\BibTeX{{\rm B\kern-.05em{\sc i\kern-.025em b}\kern-.08em
    T\kern-.1667em\lower.7ex\hbox{E}\kern-.125emX}}
\newtheorem{theorem}{Theorem}
\newtheorem{lemma}{Lemma}
\newtheorem{corollary}{Corollary}
\theoremstyle{definition}
\newtheorem{definition}{Definition}
\theoremstyle{remark}
\newtheorem{remark}{Remark}
\newcommand{\indep}{\mathrel{\bot}\joinrel\mathrel{\mkern-5mu}%
	\joinrel\mathrel{\bot}}
\newcommand{\hbtoterror}{\epsilon^{({n_1,n_2})}}
\newcommand{\hbinderror}[1]{\epsilon_{#1}^{({n_{#1}})}}
\newcommand{\hbanderror}{\epsilon_{1 \land 2}^{({n_1,n_2})}}
\newcommand{\hbtoterrortilde}{\tilde{\epsilon}^{({n_1,n_2})}}
\newcommand{\hbinderrortilde}[1]{\tilde{\epsilon}_{#1}^{({n_{#1}})}}
\newcommand{\hbanderrortilde}{\tilde{\epsilon}_{1 \land 2}^{({n_1,n_2})}}
\newcommand{\firstpart}{\mathrm{I}}
\newcommand{\secondpart}{\mathrm{II}}
\newcommand{\unitmatrix}[1]{\mathbf{\mathrm{I}}^{#1\times #1}}
\begin{document}

\title{New Inner and Outer Bounds for Gaussian Broadcast Channels with Heterogeneous Blocklength Constraints}

\author{\IEEEauthorblockN{Marcel Mross, Pin-Hsun Lin and Eduard A. Jorswieck}

\IEEEauthorblockA{Institute for Communications Technology, Technische Universität Braunschweig, Germany \\
%E-mail: \{m.mross, p.lin, e.jorswieck\}@tu-bs.de
E-mail: \{Mross, Lin, Jorswieck\}@ifn.ing.tu-bs.de
}
%\thanks{The work of M. Mross is partly supported by the Federal Ministry of Education and Research (BMBF, Germany) as part of the 6G Research and Innovation Cluster 6G-RIC under Grant 16KISK020K.}
}

\maketitle
\begin{abstract}
We investigate novel inner and outer bounds on the rate region of a 2-user Gaussian broadcast channel with finite, heterogeneous blocklength constraints (HB-GBC). In particular, we introduce a new, modified Sato-type outer bound that can be applied in the finite blocklength regime and which does not require the same marginal property. We then develop and analyze \textit{composite shell codes}, which are suitable for the HB-GBC. Especially, to achieve a lower decoding latency for the user with a shorter blocklength constraint when successive interference cancellation is used, we derive the number of symbols needed to successfully early decode the other user's message. We numerically compare our derived outer bound to the best known achievable rate regions. Numerical results show that the new early decoding performance in terms of latency reduction is significantly improved compared to the state of the art, and it performs very close to the asymptotic limit.
\end{abstract}

%\begin{IEEEkeywords}
%component, formatting, style, styling, insert
%\end{IEEEkeywords}

\section{Introduction}
As one of the application areas of 5G and beyond, Ultra-Reliable Low Latency Communication (URLLC) has attracted intensive attention. To achieve low latency, asymptotic assumptions on the codeword sizes are no longer valid, which motivates the finite blocklength analysis. Since the seminal work in \cite{Polyanskiy.2010}, there have been several extensions to multi-user channels, like the multiple access channel (MAC) \cite{MolavianJazi.2015}, the MAC with degraded message sets \cite{Scarlett.2015}, the broadcast channel \cite{Tan.2014, Unsal.2017, Sheldon.2021} and the interference channel \cite{Scarlett.2017}.

Due to different demands on Quality of Service for the wide range of available applications in modern communication systems, the latency requirements can differ among users. Therefore, in a realistic model, the two users may be required to decode after having received differing numbers of symbols $n_1$ and $n_2$, which leads to the model of the Gaussian broadcast channel with heterogeneous blocklength constraints (HB-GBC) \cite{Tuninetti.2018, Xu.2020, Lin.2021, Lin.2021b, Lin.2022}. The work \cite{Lin.2021} shows the possibility to perform superposition coding at the encoder and successive interference cancellation (SIC) at the decoder. The sufficient conditions for a successful SIC are: 1) the channel gain of the user with the stricter blocklength constraint is larger than that of the other user and 2) the shorter blocklength between the two codewords is still sufficiently large. This technique is called \textit{early decoding}, since the longer codeword is decoded before it is entirely received at the non-intended receiver. How early the decoder can successfully decode was derived for asymptotic cases \cite{Azarian.2005} and as a second-order finite blocklength result \cite{Lin.2021}. 

However, it is unclear how the heterogeneous blocklength assumption affects the capacity and, thus, how much the rates derived in \cite{Lin.2021b, Lin.2022} can be improved. Sato's outer bound \cite{Sato.1978}, a well-known asymptotic outer bound for broadcast channels, requires the same marginal property (SMP), which is invalid in the finite blocklength regime. Also, the authors of \cite{Lin.2021, Lin.2021b, Lin.2022} only consider independent and identically distributed (i.i.d.) Gaussian codebooks, because shell codebooks bring unique challenges to the analysis in the heterogeneous blocklength scenario. Since shell codes are known to have a smaller dispersion than i.i.d. Gaussian codebooks, it is highly desirable to apply them in early decoding.

The main contributions of this paper are as follows. 
\begin{itemize}
    \item  First, we present a novel scheme of applying a Sato-type outer bound technique in the finite blocklength regime, where the SMP is invalid in general. Our scheme still allows the minimization of the outer bound with respect to the joint distributions having the same marginals. We then apply this idea to the HB-GBC, leading to the first existing outer bound for this channel model. We combine the bounding technique with the information spectrum converse method \cite{Hayashi.2009} to derive the outer bounds.
    \item Second, we improve the achievable rate and the latency of the ED scheme proposed in \cite{Lin.2021} for the considered model. For this purpose, we introduce a new class of shell codes, namely, \textit{composite shell codes}, which is suitable for the heterogeneous blocklength scenario and leads to a second-order latency performance close to the asymptotic limit. The derived outer bounds are numerically compared to the best known achievable rate regions. 
    \item Additionally, we investigate some other fundamental properties of the GBC in the finite blocklength regime, which are different to those in the asymptotic case.
\end{itemize} 

This paper is organized as follows. In Section \ref{sec:preliminaries}, we introduce the system model and some preliminaries. In Section \ref{sec:outer_bound}, we present our new outer bound technique and the application to the GBC with heterogeneous blocklength constraints. In Section \ref{sec:achievable}, we present the improved achievability results. The properties of the GBC in the finite blocklength regime are discussed in Section \ref{sec:properties}. Section \ref{sec:numerical} shows some numerical calculations of the results from the previous sections.

\section{Preliminaries and System Model}\label{sec:preliminaries}
\subsection{Notation}
Real constants are denoted by uppercase sans serif letters like $\mathsf{X}$ and $\mathsf{Y}$. Vectors are written with a superscript that indicates the dimension of that vector, e.g. $x^n = [x_1, ..., x_n]$. The individual elements of that vector are indexed by subscripts, i.e., $x_i$. If the vector itself already has a subscript, e.g. $x_1^n$, then the index follows that subscript, separated by a comma: $x_{1,i}$. The concatenation of two vectors $x^n$ and $y^m$ is written as $[x^n, y^m]$ and the vector that consists solely of zeros is denoted by $\mathbf{0}^n$. Sets are denoted by calligraphic letters like $\mathcal{X}$ and $\mathcal{Y}$.

Random variables are denoted by uppercase letters like $X$ and $Y$, realizations of random variables are written in lowercase, like $x$ and $y$. We write $X \sim P_X$ to indicate that the random variable $X$ follows the distribution $P_X$, where $P_X$ denotes a probability measure. The expectation and the variance of a random variable $X$ are denoted by $\mathbb{E}[X]$ and $\mathrm{Var}[X]$, respectively, and the covariance of $X$ and $Y$ is denoted by $\mathrm{Cov}[X,Y]$. If two random variables $X$ and $Y$ are independent, we write $X \indep Y$. A random vector is indicated by a superscript like a deterministic vector. 
We will denote the Gaussian distribution with mean $\mu$ and variance $\sigma^2$ by $\mathcal{N}(\mu, \sigma^2)$ and the $Q$-function by $Q(\cdot)$. We will also use the indicator function $\mathbbm{1}(\cdot)$ and the Landau symbols $o(\cdot)$ and $O(\cdot)$. We also denote the Gaussian capacity function by $\mathsf{C}(x) = \frac{1}{2} \cdot \log(1+x)$, the shell dispersion by $\mathsf{V}(x) := \frac{\log^2 e}{2}\frac{x(x+2)}{(x+1)^2}$, and the i.i.d. Gaussian dispersion by $\mathsf{V}_{\mathrm{G}}(x) := \log^2 e\frac{x}{x+1}$. All logarithms are taken to the base 2.

\subsection{System Model}
We consider a two-user Gaussian broadcast channel with heterogeneous blocklength constraints (HB-GBC) and quasi-static block flat-fading. The received signal at user $k$ at time $i \in \{1, ..., n_k\}$ is
\begin{align}
	Y_{k,i} = \sqrt{h_k}X_{i}+ Z_{k,i}, \label{eq:system_model}
\end{align}
where $k = 1,2$ and $Z_{1,i} \sim \mathcal{N}(0,1)$ and $Z_{2,i} \sim \mathcal{N}(0,1)$ are i.i.d. and mutually independent additive white Gaussian noise random variables. The transmitter as well as the receivers have perfect knowledge of the  channel gains $h_k$. Without loss of generality, in this paper we will always assume $n_1 \geq n_2$. Furthermore, we will consider the case that the shorter blocklength constraint belongs to the user with the larger channel gain, i.e., $h_2 \geq h_1$.

\begin{definition}
	An $(n_1, n_2, \mathsf{M}_1, \mathsf{M}_2, \epsilon, \mathcal{F}^{n_1})$-code for an HB-GBC $W$ consists of:
	\begin{itemize}
		\item two message sets $\mathcal{M}_k = \{1, ..., \mathsf{M}_k\}$, $k = 1, 2$,
		\item an encoder $f_{n_1}: \mathcal{M}_1 \times \mathcal{M}_2 \to \mathcal{F}^{n_1}$, where $\mathcal{F}^{n_1} \subseteq \mathbb{R}^{n_1}$ is some pre-defined set of feasible codewords,
		\item two decoders $\phi_{n_k}: \mathbb{R}^{n_k} \to \mathcal{M}_k$, $k=1,2$,
	\end{itemize}
	such that the average system error probability satisfies
	\begin{IEEEeqnarray}{rCl}
		\hbtoterror &:=& \Pr \left[\hat{M}_1 \neq M_1 \text{ or } \hat{M}_2 \neq M_2\right] \leq \epsilon. \label{eq:total_error_prob}
	\end{IEEEeqnarray}
\end{definition}

We denote by $\mathcal{C}_{W}(n_1, n_2, \mathcal{F}^{n_1}, \epsilon)$ the set of all second-order achievable message size pairs for an HB-GBC $W$, i.e., pairs of the form $\log (\mathsf{M}_k) = n_k\mathsf{C}_k + \sqrt{n_k\mathsf{V}_k} + o\left(\sqrt{n_k}\right), \quad k = 1,2,$ with some positive constants $\mathsf{C}_k$ and $\mathsf{V}_k$ for which an $(n_1, n_2, \mathsf{M}_1, \mathsf{M}_2, \epsilon, \mathcal{F}^{n_1})$-code exists.

It is common to assume a power constraint $\mathsf{P}$ on the codewords. In that case, the feasible set of channel inputs is 
\begin{IEEEeqnarray}{rCl}
	\mathcal{F}^{n_1} &=& \mathcal{F}_{\mathrm{max}}^{n_1}(\mathsf{P}) := \left\{x^{n_1}: \|x^{n_1}\|^2 \leq n_1\mathsf{P} \right\}. \label{eq:max_power_constraint}
\end{IEEEeqnarray}
When \textit{superposition coding} is used at the encoder, the codewords $x_1^{n_1}$ and $x_2^{n_2}$ are generated independently for user 1 and user 2 and then superimposed to get the channel input $x^{n_1} = x_1^{n_1} + [x_2^{n_2}, \mathbf{0}^{n_1-n_2}]$, where $\mathbf{0}^{n_1-n_2}$ is an $(n_1-n_2)$-dimensional vector consisting solely of zeros. In that case, we also call \eqref{eq:max_power_constraint} \textit{sum power constraint} (SPC). Alternatively, we can also impose a power constraint on the individual codewords $x_1^{n_1}$ and $x_2^{n_2}$. In that case, which we call the \textit{individual power constraint} (IPC), the feasible set of codewords is defined as
\begin{IEEEeqnarray}{rCl}
	\mathcal{F}^{n_1} = \mathcal{F}_{\mathrm{max}}^{n_1, n_2}(\mathsf{P}_1, \mathsf{P}_2) := &&\left\{x^{n_1} = x_1^{n_1} + [x_2^{n_2}, \mathbf{0}^{n_1-n_2}]: \|x_k^{n_k}\|^2 \leq n_k\mathsf{P}_k, \; k = 1,2 \right\}. \label{eq:ipc}\IEEEeqnarraynumspace
\end{IEEEeqnarray}
In this paper, we will use individual error probabilities  $\epsilon_1$ at user 1 and $\epsilon_{\mathrm{SIC},1}$ and $\epsilon_{\mathrm{SIC},2}$ for the different decoding steps of SIC at user 2. In order to conform with the constraint in \eqref{eq:total_error_prob}, we have to choose these individual error probabilities properly s.t. \begin{IEEEeqnarray}{rCl}
    \epsilon_1+\epsilon_{\mathrm{SIC},1}+\epsilon_{\mathrm{SIC},2} &\leq& \epsilon. \label{eq:error_allocation}
\end{IEEEeqnarray}

If user 2 can decode user 1's message $m_1$ from the first $n_2$ received symbols $Y_{2,1}, ..., Y_{2, n_2}$ while fulfilling \eqref{eq:total_error_prob}, we call that a \emph{successful early decoding} \cite{Lin.2021}.

Throughout this paper, we will use the symbol $p := n_2/n_1\in (0,1]$ for the blocklength ratio and $\bar{p} := 1-p$.

\section{A Modified Sato-Type Outer Bound}\label{sec:outer_bound}
In the following, we use $\mathcal{W}_{\mathrm{SM}}(W)$ to denote the set of channels having the same conditional marginals $P_{Y_1|X}(y_1|x)$ and $P_{Y_2|X}(y_2|x)$ as a GBC $W= P_{Y_1Y_2|X}(y_1, y_2|x)$.

\subsection{The Same Marginal Property}\label{sec:smp}
Sato's outer bound \cite{Sato.1978} relies on the SMP, i.e., the property that the capacity regions of Gaussian broadcast channels depend only on the marginal distributions of the channels \cite[Lemma 5.1]{ElGamal.2011}. The SMP only holds as long as the decoding error probabilities at all the receivers are vanishing. In the finite blocklength regime, however, we deal with non-vanishing error probabilities. For the HB-GBC, the average system error probability in \eqref{eq:total_error_prob} decomposes into 
\begin{IEEEeqnarray}{rCl}
    \hbtoterror &=& \hbinderror{1} + \hbinderror{2} - \hbanderror, \label{eq:smp_error_prob_exact}
\end{IEEEeqnarray}
where $\hbinderror{k} := \Pr[\hat{M}_k \neq M_k]$, $k = 1,2$, and $\hbanderror := \Pr[\hat{M}_1 \neq M_1 \text{ and } \hat{M}_2 \neq M_2]$. Since a channel $V \in \mathcal{W}_{\mathrm{SM}}(W)$ can have different $\hbanderror$ from $W$ under otherwise fixed conditions, $\mathcal{C}_{W}(n_1, n_2, \mathcal{F}^{n_1}, \epsilon)$ and $\mathcal{C}_{V}(n_1, n_2, \mathcal{F}^{n_1}, \epsilon)$ can also differ. For example, consider a channel $W$ that has independent noise terms and another channel $V \in \mathcal{W}_{\mathrm{SM}}(W)$ that has highly correlated noise terms. Then we expect $V$ to have a larger $\hbanderror$ than $W$ and as a consequence, a smaller $\hbtoterror$. As a result, we cannot use the SMP in the finite blocklength regime and as a consequence, neither the original Sato-type outer bound.

\subsection{A Modified Sato-Type Outer Bound}
Even though the SMP is not valid in the finite blocklength regime, $\mathcal{C}_{V}(n_1, n_2, \mathcal{F}^{n_1}, \epsilon)$ for a channel $V \in \mathcal{W}_{\mathrm{SM}}(W)$ can serve as an outer bound of that of $W$ if the error probability constraint for $V$ is relaxed from $\epsilon$ to $2\epsilon$. This is the statement of the following result, which can be used instead of the SMP for finite blocklength outer bounds.
\begin{lemma}\label{lemma:outer_bound_lemma}
	Let $W$ be an HB-GBC with blocklength constraints $n_1$ and $n_2$, $n_1 \geq n_2$, a set of feasible codewords $\mathcal{F}^{n_1}$, and average system error probability constraint $\epsilon < 0.25$.  
	Then for all $V \in \mathcal{W}_{\mathrm{SM}}(W)$, we have
	\begin{IEEEeqnarray}{rCl}
		\mathcal{C}_{W}(n_1, n_2, \mathcal{F}^{n_1}, \epsilon) &\subseteq& \mathcal{C}_{V}(n_1, n_2, \mathcal{F}^{n_1}, 2\epsilon).\label{eq:outer_bound_lemma}
	\end{IEEEeqnarray}
\end{lemma}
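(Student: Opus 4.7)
The plan is to take any fixed $(n_1, n_2, \mathsf{M}_1, \mathsf{M}_2, \epsilon, \mathcal{F}^{n_1})$-code for $W$ --- that is, a fixed encoder $f_{n_1}$ together with fixed single-user decoders $(\phi_{n_1}, \phi_{n_2})$ --- and show that these same mappings used over $V$ induce an average system error probability of at most $2\epsilon$. Since the message sizes and the feasible set do not change, this immediately gives the stated inclusion $\mathcal{C}_W(n_1,n_2,\mathcal{F}^{n_1},\epsilon) \subseteq \mathcal{C}_V(n_1,n_2,\mathcal{F}^{n_1},2\epsilon)$.

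The key observation is that each individual decoding error probability $\hbinderror{k} = \Pr[\hat{M}_k \neq M_k]$ depends on the channel only through the conditional marginal $P_{Y_k|X}$: the encoder, the uniform message prior, and the single-user decoder $\phi_{n_k}$ all interact with $Y_k^{n_k}$ alone, so the joint law $P_{Y_1 Y_2 \mid X}$ never enters the computation of $\hbinderror{k}$. Because $V \in \mathcal{W}_{\mathrm{SM}}(W)$ shares both conditional marginals with $W$, the quantities $\hbinderror{1}$ and $\hbinderror{2}$ take identical values under $W$ and under $V$; only $\hbanderror$ is genuinely channel-dependent and may differ between the two models. Since each marginal error event is contained in the union event, we also have $\hbinderror{k} \leq \hbtoterror \leq \epsilon$ for $k=1,2$ under $W$, and hence $\hbinderror{1} + \hbinderror{2} \leq 2\epsilon$. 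Applying the decomposition \eqref{eq:smp_error_prob_exact} now under $V$ and discarding the nonnegative $\hbanderror$ term yields a total system error of at most $\hbinderror{1} + \hbinderror{2} \leq 2\epsilon$, exactly the relaxed constraint required.

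Beyond this bounding, no nontrivial calculation is needed; the proof is essentially bookkeeping built on the decomposition already established in \eqref{eq:smp_error_prob_exact}. The one step that I would state carefully is the marginal-invariance of $\hbinderror{1}$ and $\hbinderror{2}$, because the definition of the total error in \eqref{eq:total_error_prob} superficially couples both outputs; the subtlety to make explicit is that a code in the sense of the definition uses separate decoders on disjoint output blocks, so each per-user error is a functional of a single conditional marginal only. The hypothesis $\epsilon < 0.25$ does no work in the bounding argument itself --- it merely guarantees $2\epsilon < 1/2$, so that the right-hand side of \eqref{eq:outer_bound_lemma} remains a non-degenerate second-order achievable region and the inclusion has meaningful content.
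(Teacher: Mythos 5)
Your proposal is correct and follows the same essential route as the paper: the decomposition \eqref{eq:smp_error_prob_exact}, the observation that $\hbinderror{1}$ and $\hbinderror{2}$ depend only on the conditional marginals (hence are invariant across $\mathcal{W}_{\mathrm{SM}}(W)$), the bound $\hbinderror{k}\leq\hbtoterror\leq\epsilon$ under $W$, and finally dropping the nonnegative joint error term to conclude $\hbtoterrortilde\leq 2\epsilon$ under $V$. Your phrasing as a per-code argument (``fix the code, re-use it over $V$'') is arguably cleaner than the paper's, which phrases the same bound by ``fixing'' the individual error probabilities at the worst-case value $\epsilon$ and also remarks on a lower bound $\hbtoterrortilde\geq\epsilon$ that plays no role in establishing the inclusion; your reading of the role of $\epsilon<0.25$ is also correct.
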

Please refer to Appendix \ref{sec:outer_bound_lemma_proof} for the proof. We now apply Lemma \ref{lemma:outer_bound_lemma} to the HB-GBC with SPC to derive our main result as follows. 

\begin{theorem}[Sato-type Outer Bound with Heterogeneous Blocklengths]\label{theorem:heterogeneous_sato_outer_bound}
	For a two-user HB-GBC with feasible set $\mathcal{F}_{\mathrm{max}}^{n_1}(\mathsf{P})$ and channel gains $h_2 \geq h_1$, the message sizes of any coding scheme have to fulfill the following inequalities:
	\begin{IEEEeqnarray}{rCl}
	\log \mathsf{M}_k &\leq& n_k\mathsf{C}(h_k\mathsf{P}) - \sqrt{n_k\mathsf{V}(h_k \mathsf{P})}Q^{-1}(\epsilon) + \frac{1}{2}\log n_k+ O\left(1\right), \quad k = 1,2, \label{eq:m1_ub}\\
	\log \mathsf{M}_1 + \log \mathsf{M}_2 &\leq& n_1\mathsf{C}_{\mathrm{s}}^*(h_1, h_2, p, \mathsf{P}) - \sqrt{n_1\mathsf{V}_{\mathrm{s}}^*(h_1, h_2, p, \mathsf{P})}Q^{-1}(2\epsilon) + \frac{1}{2}\log n_1 + O\left(1\right), \label{eq:sum_rate_ub}\IEEEeqnarraynumspace
	\end{IEEEeqnarray}
	where
	\begin{IEEEeqnarray}{rCl}
		\mathsf{C}_{\mathrm{s}}^*(h_1, h_2, p, \mathsf{P}) &=& p\mathsf{C}\left(h_2\mathsf{P}\right)  + \bar{p}\mathsf{C}\left(h_1\mathsf{P}\right) + \log e\frac{\bar{p}}{2}\left(\frac{h_2}{1+h_2\mathsf{P}} - 		\frac{h_1}{1+h_1\mathsf{P}}\right)\mathsf{P}, \label{eq:converse_sato_fo_term_opt} \\
		\mathsf{V}_{\mathrm{s}}^*(h_1, h_2, p, \mathsf{P}) &=& \frac{\log^2 e}{4}\left(\frac{p\cdot2h_{2}^2\mathsf{P}^2+4h_2\mathsf{P}}{(1 + h_{2}\mathsf{P})^2} \!+\! \bar{p}\frac{2h_1^2\mathsf{P}^2}{(1+h_1\mathsf{P})^2} + \bar{p}\,\mathsf{V}_{\mathrm{s,a}}^*(h_1, h_2, \mathsf{P})\right),\label{eq:converse_sato_variance_opt}\\
		\mathsf{V}_{\mathrm{s,a}}^*(h_1, h_2, \mathsf{P}) &=& \begin{cases}
			\frac{4h_{1}\mathsf{P}}{(1+h_{1}\mathsf{P})^2} - \frac{4h_{2}\mathsf{P}}{(1+h_{2}\mathsf{P})^2} & \text{if }\mathsf{P}^2 < \frac{1}{h_1h_{2}}, \\
			0  & \text{otherwise}.
		\end{cases}\label{eq:sato_additional_variance_opt}\IEEEeqnarraynumspace
	\end{IEEEeqnarray}	
\end{theorem}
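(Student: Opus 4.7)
The plan is to prove the two inequalities \eqref{eq:m1_ub} and \eqref{eq:sum_rate_ub} by separate arguments. The individual bound \eqref{eq:m1_ub} follows from a direct reduction to a point-to-point converse: since $\hbinderror{k}\le\hbtoterror\le\epsilon$, any $(n_1,n_2,\mathsf M_1,\mathsf M_2,\epsilon,\mathcal F_{\mathrm{max}}^{n_1}(\mathsf P))$-code induces an $(n_k,\mathsf M_k,\epsilon)$-code for the single-user AWGN channel of gain $h_k$, blocklength $n_k$, and maximal-power constraint $\mathsf P$. The sharp third-order asymptotic converse for that channel (the Polyanskiy--Poor--Verd\'u meta-converse with the $\tfrac12\log n$ third-order refinement) then yields \eqref{eq:m1_ub}; the shell dispersion $\mathsf V(h_k\mathsf P)$ appears because the extremal input under a maximal-power constraint concentrates on the power sphere.

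For the sum-rate bound \eqref{eq:sum_rate_ub} the strategy is to combine Lemma~\ref{lemma:outer_bound_lemma} with a carefully chosen $V\in\mathcal W_{\mathrm{SM}}(W)$ and then collapse the broadcast converse into a single-user converse. For $i\in\{1,\ldots,n_2\}$ I would couple the two noise components jointly Gaussian with correlation $\rho^{\star}=\sqrt{h_1/h_2}$, which forces the physical-degradation order $X\to Y_{2,i}\to Y_{1,i}$; for $i\in\{n_2+1,\ldots,n_1\}$ only $Y_{1,i}$ is present and its marginal is left unchanged. By construction $V\in\mathcal W_{\mathrm{SM}}(W)$, so Lemma~\ref{lemma:outer_bound_lemma} gives $\mathcal C_{W}(n_1,n_2,\mathcal F_{\mathrm{max}}^{n_1}(\mathsf P),\epsilon)\subseteq\mathcal C_{V}(n_1,n_2,\mathcal F_{\mathrm{max}}^{n_1}(\mathsf P),2\epsilon)$, provided $2\epsilon<0.5$ so that the inverse $Q$-function is well-defined. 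A cooperation genie that hands the tuple $(Y_1^{n_1},Y_2^{n_2})$ to both receivers can only enlarge the region, so the converse reduces to a single-user problem with input $X^{n_1}$ on the power sphere of radius $\sqrt{n_1\mathsf P}$ and a joint message $(M_1,M_2)$.

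On this enhanced single-user channel I would invoke the information-spectrum converse of~\cite{Hayashi.2009}, which bounds $\log(\mathsf M_1\mathsf M_2)$ in terms of a $(2\epsilon)$-quantile of the information density $\imath(X^{n_1};Y_1^{n_1},Y_2^{n_2})$ under any test output distribution. With a spherically uniform input, the per-symbol contributions split into two independent blocks: the first $n_2$ coordinates see an effective per-symbol channel of gain $h_2$ (since at $\rho^{\star}$ the receiver $Y_1$ is a degraded version of $Y_2$ and therefore non-informative) and the last $n_1-n_2$ coordinates see gain $h_1$. A Berry--Ess\'een expansion applied to the sum of the $n_1$ independent per-symbol information-density terms then yields the Gaussian approximation with mean $n_1\mathsf C_{\mathrm s}^{*}$ and variance $n_1\mathsf V_{\mathrm s}^{*}$, while the $\tfrac12\log n_1+O(1)$ third-order correction comes from the standard sphere-hardening refinement of the point-to-point Gaussian converse.

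The main obstacle I anticipate is matching the exact constants in \eqref{eq:converse_sato_fo_term_opt}--\eqref{eq:sato_additional_variance_opt}. Because the per-symbol channels in the two blocks have different SNRs, the shell (equal-power-per-symbol) distribution is not the capacity-achieving input for $V$: the extra term $\log e\,\tfrac{\bar p}{2}\bigl(\tfrac{h_2}{1+h_2\mathsf P}-\tfrac{h_1}{1+h_1\mathsf P}\bigr)\mathsf P$ in $\mathsf C_{\mathrm s}^{*}$ appears to capture the first-order correction from replacing an optimally water-filled product output distribution by the product of two equal-power blocks, and the case-split in $\mathsf V_{\mathrm{s,a}}^{*}$ at $\mathsf P^2<1/(h_1h_2)$ presumably marks the regime in which the extremal power allocation hits a non-negativity boundary. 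Tracking these corrections cleanly---while keeping the $O(1)$ accounting consistent through the $\epsilon\mapsto 2\epsilon$ inflation of Lemma~\ref{lemma:outer_bound_lemma} and the spherical-input cumulant computations---is where the bulk of the work lies.
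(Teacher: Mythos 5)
Your skeleton is essentially the paper's: the individual bounds come from a point-to-point reduction, and the sum-rate bound comes from Lemma~\ref{lemma:outer_bound_lemma} applied to a channel $V$ with jointly Gaussian noise of correlation $\rho^{\star}=\sqrt{h_1/h_2}$, then the Sato cooperation genie, the information-spectrum converse of~\cite{Hayashi.2009}, and a Berry--Esseen expansion. The paper actually optimizes the first-order coefficient over $\rho\in[0,1)$ and then shows $\rho^{\star}$ is the minimizer, whereas you posit $\rho^{\star}$ from the start via the degradation intuition; that is a cosmetic difference, though a complete writeup would still have to verify optimality.

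The genuine gap is in the last paragraph, where you misidentify the mechanism that produces the extra first-order term and the case split in $\mathsf V_{\mathrm{s,a}}^{*}$. These do not come from a water-filling correction or from an ``extremal power allocation hitting a non-negativity boundary,'' and you cannot reason about ``a spherically uniform input'' in the converse: the converse must hold for every codeword in $\mathcal F_{\mathrm{max}}^{n_1}(\mathsf P)$, so $x^{n_1}$ is deterministic and you may not assume it is spread uniformly on the sphere (the power constraint bounds only $\|x^{n_1}\|^2\le n_1\mathsf P$, not the split between the two blocks). The actual mechanism is this: when you evaluate the modified information density $\tilde\imath(x^{n_1};Y_1^{n_1},Y_2^{n_2})$ against the i.i.d.\ Gaussian output $Q_{Y_1Y_2}$, both its conditional mean and its conditional variance are affine in the uncontrolled quantity $\sum_{i=n_2+1}^{n_1}x_i^2$. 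The slope of the mean, the constant $\mathsf C_{\rho,2}$ in \eqref{eq:c_rho_2}, is $\le 0$ at $\rho^{\star}$, so the worst-case (largest) mean is attained at $\sum_{i>n_2}x_i^2=0$; substituting that into $\mathsf C_{\rho,1}$ yields exactly the term $\log e\,\tfrac{\bar p}{2}\bigl(\tfrac{h_2}{1+h_2\mathsf P}-\tfrac{h_1}{1+h_1\mathsf P}\bigr)\mathsf P$ in $\mathsf C_{\mathrm s}^{*}$. The slope of the variance, $\mathsf V_{\rho,2}$, changes sign at $\mathsf P^2=1/(h_1h_\rho)$; choosing the worst case of the denominator (the endpoint of $\sum_{i>n_2}x_i^2\in[0,(n_1-n_2)\mathsf P]$ that maximizes it) gives precisely the case split $\mathsf V_{\mathrm{s,a}}^{*}$. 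Without this worst-case-over-the-codeword-tail-power argument, the specific constants in \eqref{eq:converse_sato_fo_term_opt}--\eqref{eq:sato_additional_variance_opt} do not materialize, so you should replace the speculative water-filling explanation with this bounding step.
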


\begin{proof}
 Appendix \ref{sec:proof_het_sato}.
\end{proof}
We can specialize Theorem \ref{theorem:heterogeneous_sato_outer_bound} to the homogeneous blocklength case by increasing $n_2$ to $n_1$, i.e., inserting $p=1$ into \eqref{eq:sum_rate_ub}. Since increasing $n_2$ gives the cooperative receiver used in the Sato-type outer bound an additional advantage, the result is still an outer bound to the heterogeneous blocklength case.

\begin{corollary}[Sato-type outer bound with Homogeneous Blocklengths]\label{theorem:homogeneous_sato_outer_bound}
	For a two-user HB-GBC with feasible set $\mathcal{F}_{\mathrm{max}}^{n_1}(\mathsf{P})$ and channel gains $h_2 \geq h_1$, the message sizes of any coding scheme have to fulfill the inequalities \eqref{eq:m1_ub} and
	\begin{IEEEeqnarray}{rCl}
		\log \mathsf{M}_1 + \log \mathsf{M}_2 &\leq& n_1\mathsf{C}\left(h_2\mathsf{P}\right) - \sqrt{n_1\mathsf{V}\left(h_2\mathsf{P}\right)}Q^{-1}(2\epsilon) + \frac{1}{2}\log n_1 + O\left(1\right). \label{eq:sum_rate_ub_hom}
	\end{IEEEeqnarray}
\end{corollary}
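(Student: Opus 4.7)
The plan is to derive Corollary 1 directly from Theorem 1 by specialising to $p = 1$, combined with an enhancement argument that lets us still treat the case $n_2 < n_1$. Concretely, given any $(n_1, n_2, \mathsf{M}_1, \mathsf{M}_2, \epsilon, \mathcal{F}_{\mathrm{max}}^{n_1}(\mathsf{P}))$-code with $n_2 \leq n_1$, I would construct an enhanced code in which user~2's decoder is allowed to observe all $n_1$ channel outputs $Y_{2,1},\dots,Y_{2,n_1}$ rather than only the first $n_2$. The enhanced decoder can simply discard the extra $n_1-n_2$ symbols and apply the original decoding rule, so the system error probability is unchanged. Hence the enhanced code is a valid $(n_1, n_1, \mathsf{M}_1, \mathsf{M}_2, \epsilon, \mathcal{F}_{\mathrm{max}}^{n_1}(\mathsf{P}))$-code to which Theorem 1 applies with $n_2 = n_1$, i.e., $p = 1$.

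Next I would specialise the closed-form expressions of Theorem 1 at $p = 1$ (so $\bar p = 0$). All $\bar p$-terms in \eqref{eq:converse_sato_fo_term_opt} and \eqref{eq:converse_sato_variance_opt} drop out, so
\begin{IEEEeqnarray*}{rCl}
\mathsf{C}_{\mathrm{s}}^*(h_1,h_2,1,\mathsf{P}) &=& \mathsf{C}(h_2\mathsf{P}),\\
\mathsf{V}_{\mathrm{s}}^*(h_1,h_2,1,\mathsf{P}) &=& \frac{\log^2 e}{4}\cdot\frac{2h_2^2\mathsf{P}^2+4h_2\mathsf{P}}{(1+h_2\mathsf{P})^2}=\frac{\log^2 e}{2}\cdot\frac{h_2\mathsf{P}(h_2\mathsf{P}+2)}{(1+h_2\mathsf{P})^2}=\mathsf{V}(h_2\mathsf{P}).
\end{IEEEeqnarray*}
Substituting these into \eqref{eq:sum_rate_ub} immediately yields \eqref{eq:sum_rate_ub_hom}. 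The individual bound \eqref{eq:m1_ub} for $k=1,2$ depends only on the single-user marginal channel from $X$ to $Y_k$, which is unaffected by the enhancement, so it is inherited verbatim from Theorem 1.

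There is essentially no analytical obstacle beyond verifying that the $\bar p$-weighted corrections in $\mathsf{C}_{\mathrm{s}}^*$ and $\mathsf{V}_{\mathrm{s}}^*$, including the piecewise correction $\mathsf{V}_{\mathrm{s,a}}^*$ of \eqref{eq:sato_additional_variance_opt}, vanish at $p=1$; this is immediate since every term carrying $\bar p$ is multiplied by zero. The conceptual point worth emphasising is the direction of the enhancement: enlarging $n_2$ only provides the actual (non-cooperative) receiver~2 with more information, which can only weaken the error-probability constraint, so the necessary condition obtained from Theorem 1 at $p=1$ remains a valid outer bound for the original heterogeneous code. This completes the derivation.
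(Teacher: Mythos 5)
Your proposal is correct and takes essentially the same route as the paper: specialize Theorem~\ref{theorem:heterogeneous_sato_outer_bound} at $p=1$, justified by monotonicity in $n_2$, and verify that $\mathsf{C}_{\mathrm{s}}^*$ and $\mathsf{V}_{\mathrm{s}}^*$ collapse to $\mathsf{C}(h_2\mathsf{P})$ and $\mathsf{V}(h_2\mathsf{P})$. Your explicit enhancement construction---letting user~2's decoder discard the extra $n_1-n_2$ symbols so that the original code is seen as a valid $(n_1, n_1, \mathsf{M}_1, \mathsf{M}_2, \epsilon, \mathcal{F}_{\mathrm{max}}^{n_1}(\mathsf{P}))$-code---is simply a slightly more formal rendering of the paper's one-line justification that ``increasing $n_2$ gives the cooperative receiver an additional advantage.''
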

%Note that an outer bound similar to Corollary \ref{theorem:homogeneous_sato_outer_bound} was suggested in \cite{Sheldon.2021}, but the authors there do not discuss how they take the invalidity of the SMP into account and therefore it is unclear whether their bound really holds for finite blocklengths.

Comparing only the first-order terms of Theorem \ref{theorem:heterogeneous_sato_outer_bound}  and Corollary \ref{theorem:homogeneous_sato_outer_bound} we find that to get from $\mathsf{C}(h_2\mathsf{P})$ to $\mathsf{C}_{\mathrm{s}}^*(h_1, h_2, p, \mathsf{P})$,
 	we have to substitute a portion $\frac{1-p}{2}\log \left[1+h_2\mathsf{P}\right]$ of $\mathsf{C}(h_2\mathsf{P})$ by $\frac{1-p}{2}\log \left[1+h_1\mathsf{P}\right]$ and add an additional contribution $ \log e\frac{1-p}{2}\left(\frac{h_2}{1+h_2\mathsf{P}} - 		\frac{h_1}{1+h_1\mathsf{P}}\right)\mathsf{P}$. This is an improvement if
	 \begin{align}
 		\log \left[1+h_2\mathsf{P}\right] &> \log \left[1+h_1\mathsf{P}\right] + \log e\left(\frac{h_2}{1+h_2\mathsf{P}} - 		\frac{h_1}{1+h_1\mathsf{P}}\right)\mathsf{P} \\
 		\Leftrightarrow \log \left[1+h_2\mathsf{P}\right] - \log e\frac{h_2\mathsf{P}}{1+h_2\mathsf{P}} &> \log \left[1+h_1\mathsf{P}\right] - \log e \frac{h_1\mathsf{P}}{1+h_1\mathsf{P}}.
 	\end{align}
 	Since the function $f(x) = \log(1+x) - \log e \frac{x}{1+x}$ is monotonically increasing for $x \geq 0$%\footnote{We have $f'(x) = \frac{1}{\log e}\cdot \frac{x}{(1+x)^2} \geq 0$ for $x \geq 0$.}
 	, this is equivalent to saying that the first-order term of \eqref{eq:sum_rate_ub} is always smaller than that of \eqref{eq:sum_rate_ub_hom}, since by assumption we have $h_2 \geq h_1$. The first-order advantage of \eqref{eq:sum_rate_ub} over \eqref{eq:sum_rate_ub_hom} then gets larger with increasing $h_2$. However, note that the different second-order terms make a full comparison more difficult.
 	This shows that the bound from Theorem \ref{theorem:heterogeneous_sato_outer_bound} takes the disadvantage of user 2 due to the shorter blocklength constraint into account, since it is smaller than the more naive homogeneous blocklength approach.

\section{Improved Achievability Scheme}\label{sec:achievable}
In this section, we will introduce a new achievability result for the individual power constraint using \textit{composite shell codes}, which are a new extension of the shell code concept to the early decoding technique of \cite{Lin.2021}. We derive the number of necessary symbols for a successful early decoding and the rate region using composite shell codes.

\subsection{Composite Shell Codes}
The authors of \cite{Lin.2021} derived an achievable rate region using superposition coding with IPC by the \textit{early decoding} technique with i.i.d. Gaussian codebooks. Since the dispersion of i.i.d. Gaussian codebooks is suboptimal, it is desirable to apply shell codewords to the setup. We define the set of shell codewords as
\begin{align}
	\mathcal{S}^{(n)}(\mathsf{P}) := \left\{x^n: \|x^n\|^2 = n\mathsf{P}\right\}.
\end{align}
They are characterized by the fact that they fulfill the power constraint $\|x^n\|^2 \leq  n\mathsf{P}$ with equality (\textit{equal-power property}).
The challenge is that, for early decoding, the codeword is not received completely, so the received codeword does not necessarily fulfill the equal-power property. Therefore, we introduce a modified class of shell codes to solve this problem.

We define the set of \textit{composite shell codewords} (CSC) as the set of all codewords composed of two sub-codewords that both fulfill the equal-power property, so the cost violation probability is equal to zero:
\begin{IEEEeqnarray}{rCl}
	\mathcal{S}^{(n_2, n_1)}(\mathsf{P}) &:=& \left\{x^{n_1}: \|[x_1, ..., x_{n_2}]\|^2 = n_2\mathsf{P},  \; \|[x_{n_2+1}, ..., x_{n_1}]\|^2 = (n_1-n_2)\mathsf{P}\right\}.\IEEEeqnarraynumspace
\end{IEEEeqnarray}
We will choose codewords according to the uniform distribution over $\mathcal{S}^{(n_2, n_1)}(\mathsf{P})$, which will be shown to be obtained by uniformly and independently selecting the two concatenated sub-codewords from their respective power shells.

\subsection{Improved Early Decoding}
The following result is an improvement of \cite[Theorem 1]{Lin.2021} by using composite shell codes instead of an i.i.d. Gaussian codebook for user 1.
\begin{theorem}\label{theorem:ed_symbols}
		Consider an HB-GBC with channel gains $h_1 \leq h_2$, blocklength constraints $n_1 \geq n_2$, feasible set $\mathcal{F}_{\mathrm{max}}^{n_1, n_2}(\mathsf{P}_1, \mathsf{P}_2)$ and let $n_1$ be sufficiently large. Furthermore, let $\epsilon_1$, $\epsilon_{\mathrm{SIC},1}$ and $\epsilon_{\mathrm{SIC},2}$ be the target error probabilities at user 1 and at the two SIC steps at user 2, respectively, s.t. they fulfill \eqref{eq:error_allocation}. Then successful early decoding can be performed if the following inequality holds:
	\begin{align}
		n_2 \geq \left(\frac{\sqrt{\mathsf{V}(g_2\mathsf{P}_1)}Q^{-1}(\epsilon_{\mathrm{SIC},1})}{2\mathsf{C}(g_2\mathsf{P}_1)} +\sqrt{\frac{\mathsf{V}(g_2\mathsf{P}_1)(Q^{-1}(\epsilon_{\mathrm{SIC},1}))^2}{4\mathsf{C}(g_2\mathsf{P}_1)^2} + \frac{\log (\mathsf{M}_1)}{\mathsf{C}(g_2\mathsf{P}_1)}}\right)^2, \label{eq:ed_symbols}
	\end{align}
	where $g_2 := \frac{h_2}{1+h_2\bar{\mathsf{P}}_2}$ and $\bar{\mathsf{P}}_2 := \mathsf{P}_2 - \delta, \delta > 0$.
	In that case, all message size pairs fulfilling the following inequalities:
	\begin{align}
	    \log \mathsf{M}_1 &\leq  n_1\bar{\mathsf{C}}_1 - \sqrt{n_1\bar{\mathsf{V}}_1}Q^{-1}(\epsilon_1) + O\left(1\right),\\
		\log \mathsf{M}_2 &\leq n_2\mathsf{C}(h_2\bar{\mathsf{P}}_2) - \sqrt{n_2\mathsf{V}_{\mathrm{G}}(h_2\bar{\mathsf{P}}_2)}Q^{-1}(\epsilon_{\mathrm{SIC},2}) + O\left(1\right)
	\end{align}
	are achievable, where $g_1 := \frac{h_1}{1+h_1\bar{\mathsf{P}}_2}$ and 
	\begin{align}
		\bar{\mathsf{C}}_1 &:= p\mathsf{C}(g_1\mathsf{P}_1) + \bar{p}\mathsf{C}(h_1\mathsf{P}_1), \\
		\bar{\mathsf{V}}_1 &:= p\mathsf{V}(g_1\mathsf{P}_1) + \bar{p}\mathsf{V}(h_1\mathsf{P}_1).
	\end{align}
\end{theorem}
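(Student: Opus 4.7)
The plan is to extend the random-coding argument of \cite{Lin.2021}[Theorem 1] by replacing user 1's i.i.d.\ Gaussian codebook with a uniform draw from $\mathcal{S}^{(n_2,n_1)}(\mathsf{P}_1)$; concretely, user 1's codeword is obtained by concatenating two independent sub-codewords drawn uniformly from the power shells of length $n_2$ and $n_1-n_2$, respectively. For user 2, I keep the i.i.d.\ Gaussian codebook with per-symbol variance $\bar{\mathsf{P}}_2 = \mathsf{P}_2 - \delta$; the back-off $\delta > 0$ ensures, via a Chernoff bound, that the power-violation event $\{\|X_2^{n_2}\|^2 > n_2 \mathsf{P}_2\}$ decays exponentially in $n_2$ and can be absorbed into the $O(1)$ term of every rate bound. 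The three error events in the allocation \eqref{eq:error_allocation} are then analyzed separately and combined by a union bound.

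For the early-decoding step (probability $\epsilon_{\mathrm{SIC},1}$), user 2 tries to decode $M_1$ from its first $n_2$ received symbols $Y_{2,1}^{n_2} = \sqrt{h_2}(X_{1,1}^{n_2}+X_2^{n_2}) + Z_{2,1}^{n_2}$. Treating $\sqrt{h_2}X_2^{n_2}+Z_2^{n_2}$ as effective noise of variance $1+h_2\bar{\mathsf{P}}_2$, the first half of user 1's CSC is a pure shell of length $n_2$, so the finite-blocklength shell-code achievability bound with effective SNR $g_2\mathsf{P}_1$ applies and yields
\begin{equation*}
\log \mathsf{M}_1 \le n_2 \mathsf{C}(g_2\mathsf{P}_1) - \sqrt{n_2 \mathsf{V}(g_2\mathsf{P}_1)}\,Q^{-1}(\epsilon_{\mathrm{SIC},1}) + O(1).
\end{equation*}
Viewing the right-hand side as a quadratic in $\sqrt{n_2}$ and taking the larger root produces precisely \eqref{eq:ed_symbols}.

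For user 1's decoder (probability $\epsilon_1$), the CSC construction causes the information density to split additively into two independent contributions: a length-$n_2$ shell portion seen over a channel with SNR $g_1\mathsf{P}_1$ (user 2's signal as interference) and a length-$(n_1-n_2)$ shell portion over a clean channel with SNR $h_1\mathsf{P}_1$. By independence, the two information densities sum to a random variable with mean $n_1\bar{\mathsf{C}}_1$ and variance $n_1\bar{\mathsf{V}}_1$, and a Berry--Esseen refinement of the dependence-testing bound gives the stated achievable $\log \mathsf{M}_1$. After correct early decoding, user 2 subtracts $X_1^{n_2}$ and decodes its own message from a clean channel with SNR $h_2\bar{\mathsf{P}}_2$; since user 2's codebook is i.i.d.\ Gaussian, \cite{Polyanskiy.2010} yields the $\log \mathsf{M}_2$ bound with the i.i.d.\ Gaussian dispersion $\mathsf{V}_{\mathrm{G}}$.

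The main obstacle is the early-decoding step: the interference $\sqrt{h_2}X_2^{n_2}$ is i.i.d.\ per symbol but its empirical power fluctuates, so the effective noise at user 2's pre-SIC decoder is not exactly $\mathcal{N}(0,1+h_2\bar{\mathsf{P}}_2)$ in distribution. I would handle this by conditioning on the typical-power event $\|X_2^{n_2}\|^2 \le n_2\mathsf{P}_2$ (which holds outside an exponentially small set), showing that on this event the conditional information density can still be bounded via a moment-generating-function / Berry--Esseen argument by the shell formula with variance $\mathsf{V}(g_2\mathsf{P}_1)$ up to an $O(1)$ correction. Verifying that the $\delta$-back-off is large enough for this correction to remain $O(1)$ uniformly in $n_2$, while simultaneously keeping the first-order term unchanged, is the most delicate point and is what distinguishes the present proof from the i.i.d.-Gaussian analysis of \cite{Lin.2021}.
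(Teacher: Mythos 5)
Your high-level plan matches the paper's: user 1 draws from the composite shell $\mathcal{S}^{(n_2,n_1)}(\mathsf{P}_1)$, user 2 uses an i.i.d.\ Gaussian codebook with back-off $\delta$, the error budget is split three ways per \eqref{eq:error_allocation}, the early-decoding threshold is obtained by solving a quadratic in $\sqrt{n_2}$, and user 1's information density decomposes into a length-$n_2$ piece at SNR $g_1\mathsf{P}_1$ plus a length-$(n_1-n_2)$ piece at SNR $h_1\mathsf{P}_1$, giving $\bar{\mathsf{C}}_1$ and $\bar{\mathsf{V}}_1$. However, the point you single out as ``the most delicate'' is not actually an obstacle in this construction. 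The error probability of user 2's first SIC step is evaluated over the random code ensemble, and user 2's codebook is drawn i.i.d.\ $\mathcal{N}(0,\bar{\mathsf{P}}_2)$; hence the effective noise $\sqrt{h_2}X_2^{n_2}+Z_2^{n_2}$ is \emph{exactly} i.i.d.\ $\mathcal{N}(0,1+h_2\bar{\mathsf{P}}_2)$ --- no conditioning on a typical-power event for $X_2^{n_2}$, no MGF argument, no $O(1)$ correction to the dispersion is required, and the paper does none of this. (The $\delta$-back-off is used only to make the power-violation term $\Pr[X_2^{n_2}\notin\mathcal{F}^{n_2}]$ in the dependence-testing bound exponentially small, which you correctly state elsewhere.) You are manufacturing difficulty in a place the construction was explicitly designed to avoid it.

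The ingredient you actually omit is the one that makes the modified dependence-testing bound of \cite{MolavianJazi.2015} applicable to the CSC input: a bound on the Radon--Nikodym factor $\mathsf{K}_n=\sup_{y^n}\,dP_{Y^n}(y^n)/dQ_{Y^n}(y^n)$, where $P_{Y^n}$ is the output law induced by the composite-shell input and $Q_{Y^n}$ is the i.i.d.\ Gaussian reference. This $\mathsf{K}_n$ enters the decoding threshold $\gamma_n = \mathsf{K}_n\mathsf{M}$ and therefore the achievable $\log\mathsf{M}$. For a single power shell the constant is known, but for the product of two independent shells it must be re-derived; this is precisely Lemma~\ref{lemma:radon_nikodym_concatenated}, which shows $dP_{Y^{n_1}}/dQ_{Y^{n_1}} \le \tilde{\mathsf{K}} = 729\cdot\tfrac{\pi}{8}\cdot\tfrac{(1+\mathsf{P})^2}{1+2\mathsf{P}}$, and which is the source of the larger third-order penalty flagged in the remark following Theorem~\ref{theorem:ed_symbols}. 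Without proving that $\mathsf{K}_{n_1}$ is bounded by an $n$-independent constant for CSC inputs, the ``Berry--Esseen refinement of the dependence-testing bound'' you invoke for user 1's rate has no basis, so this step needs to be supplied.
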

For the proof, please refer to Appendix \ref{sec:proof_ed_symbols}. The main proof steps can be summarized as follows:
\begin{itemize}
	\item We use the proposed composite shell codes for user 1 i, s.t. the first $n_2$ symbols also fulfill the equal-power property. User 2 still uses i.i.d. Gaussian codewords, s.t. the interference he produces is still Gaussian.
	\item We evaluate the information density at user 2 for the first SIC step. User 2 receives a shell codeword from user 1 of length $n_2$ with additive white Gaussian noise. The expectation and variance of the information density are $\mathsf{C}(g_2\mathsf{P}_1)$ and $\mathsf{V}(g_2\mathsf{P}_1)$, respectively. The expectation as well as the variance are now independent of the realization of $x^{n_1}$, which is an important difference to \cite{Lin.2021}. The reason for the independence of $x^{n_1}$ is that the equal-power property allows the useful simplification $\|x^{n_2}\|^2 = n_2\mathsf{P}$ in the derivation of the information density, just like in the point-to-point case with shell codes. This simplification is not available when dealing with i.i.d. Gaussian codebooks, where the expectation and variance of the information density therefore depend on the specific realization of $x^{n_1}$.
	\item Then we proceed with similar steps as in the proof of \cite[Theorem 1]{Lin.2021}, but with the difference that we do not need to bound $0 \leq \|x^{n_2}\|^2 \leq n_1\mathsf{P}_1$ in several steps, since the expressions are already independent of $x^{n_2}$.
	\item Also, we do not have to bound $n_1 > n_2$ as in \cite{Lin.2021}, and we can solve for $n_2$ using the quadratic formula.
\end{itemize}

Compared to \cite{Lin.2021b}, we now achieve the shell dispersion $\mathsf{V}(\cdot)$ in the second-order term, instead of the i.i.d. Gaussian dispersion. The use of CSC does not lead to a decreased second-order performance compared to standard shell codes. The penalty from using CSC is captured in the constant $\tilde{\mathsf{K}}$ (Lemma \ref{lemma:radon_nikodym_concatenated} in Appendix \ref{sec:proof_ed_symbols}), which belongs to the third-order term.

\begin{remark}
	As becomes clear from the proof of Theorem \ref{theorem:ed_symbols}, the improved second-order performance compared to \cite{Lin.2021b} comes at the cost of introducing a constant $\tilde{\mathsf{K}}$ which makes the third-order term larger. This will make the second-order approach more inaccurate at smaller blocklengths. 
	
	Additionally, it is possible to extend this approach by using a concatenation of more than two codewords, for example for the $m$-user HB-GBC. If $m$ codewords are concatenated (where $m$ may be any constant, but not a function of $n$), the Radon-Nikodym derivative from Lemma \ref{lemma:radon_nikodym_concatenated} in Appendix \ref{sec:proof_ed_symbols} will be bounded by 
	\begin{align}
		\frac{dP_{Y^n}(y^n)}{dQ_{Y^n}(y^n)} \leq \mathsf{K}^m,
	\end{align}
	which will make the third-order term larger and larger with increasing $m$. Since it only considers the first- and second-order term, this behavior is not captured by the Gaussian approximation.
\end{remark}

\section{Properties of the GBC in the finite blocklength regime}\label{sec:properties}
In this section, we investigate the stochastical degradedness property of the broadcast channel and the time sharing technique in the finite blocklength regime. It will be shown that there exist important differences to the asymptotic case.

\subsection{Stochastical Degradedness}
The notion of stochastical degradedness is commonly used in the asymptotic analysis of broadcast channels. A broadcast channel is called \textit{stochastically degraded} if it has the same condition marginal distributions as a physically degraded channel \cite[p. 112]{ElGamal.2011}. The SMP (Section \ref{sec:smp}) states that two channels having the same marginals is sufficient to guarantee that they also have the same asymptotic capacity region. However, as we have shown, the SMP is not valid for non-vanishing error probabilities. Therefore, even though the definition of stochastical degradedness is still valid, we can no longer use it to transform a broadcast channel in an equivalent, physically degraded channel.

\subsection{Time Sharing}\label{sec:time_sharing}

The \textit{time sharing} technique is commonly used in achievability schemes to convexify rate regions. Time sharing between two first-order achievable rate pairs $(\tilde{\mathsf{R}}_1, \tilde{\mathsf{R}}_2)$ and $(\hat{\mathsf{R}}_1, \hat{\mathsf{R}}_2)$ is a way to achieve all rates 
\begin{align}
	\left(\alpha \tilde{\mathsf{R}}_1 + \bar{\alpha}\hat{\mathsf{R}}_1, \alpha \tilde{\mathsf{R}}_2 + \bar{\alpha}\hat{\mathsf{R}}_2\right), \qquad \alpha \in [0, 1], \label{eq:time_sharing_asymptotic}
\end{align}
where $\bar{\alpha} = 1-\alpha$. These rates are obtained by dividing the total blocklength into two sub-blocks of lengths $\alpha n$ and $(1-\alpha)n$, assuming they both are integers, and using the scheme that achieves $(\tilde{\mathsf{R}}_1, \tilde{\mathsf{R}}_2)$ in the first block and the scheme that achieves $(\hat{\mathsf{R}}_1, \hat{\mathsf{R}}_2)$ in the second block \cite[p. 85]{ElGamal.2011}. 

When applying this concept in the finite blocklength regime, this means that the two codewords are even shorter than the given blocklength constraint, which was already noted by \cite{Tan.2014} for the multiple access channel. However, while the authors of \cite{Tan.2014} only state that time sharing is not available because of the blocklength constraint, we will now investigate what happens if we allow the two codewords between which time sharing is applied to be shorter than the original codeword. It turns out that the result \eqref{eq:time_sharing_asymptotic} is not valid for finite blocklengths\footnote{For the sake of simplicity, we consider homogeneous blocklenghts. The results directly transfer to the heterogeneous blocklength case.}:
\begin{theorem}\label{theorem:time_sharing_fbl}
	Time sharing in the finite blocklength regime results in a non-convex rate region. 
\end{theorem}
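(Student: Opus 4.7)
My plan is to show by direct calculation that the rate curve traced by time sharing between two Pareto-incomparable second-order achievable operating points is strictly convex in the time-sharing parameter $\alpha$, and to deduce from this that the resulting downward-closed rate region is non-convex.

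First I fix two second-order achievable rate pairs with first-order capacities $\tilde{\mathsf{C}}_k, \hat{\mathsf{C}}_k$, dispersions $\tilde{\mathsf{V}}_k, \hat{\mathsf{V}}_k$ ($k=1,2$), and individual error probabilities $\tilde{\epsilon}, \hat{\epsilon}$ satisfying $\tilde{\epsilon} + \hat{\epsilon} \leq \epsilon$ by a union bound over the two sub-blocks. I assume without loss of generality that the two endpoints are Pareto-incomparable, since otherwise time sharing is useless. Splitting $n$ into sub-blocks of lengths $\alpha n$ and $\bar{\alpha} n$ (chosen integer-valued with $n$ taken sufficiently large) and concatenating the two independent codebooks yields
\begin{IEEEeqnarray*}{rCl}
\log \mathsf{M}_k(\alpha) &=& \alpha n \tilde{\mathsf{C}}_k + \bar{\alpha} n \hat{\mathsf{C}}_k - \sqrt{\alpha n \tilde{\mathsf{V}}_k}\,Q^{-1}(\tilde{\epsilon}) - \sqrt{\bar{\alpha} n \hat{\mathsf{V}}_k}\,Q^{-1}(\hat{\epsilon}) + o(\sqrt{n}).
\end{IEEEeqnarray*}

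Comparing the per-symbol rate $\mathsf{R}_k(\alpha) := \log \mathsf{M}_k(\alpha)/n$ with the asymptotic convex combination $\alpha \tilde{\mathsf{R}}_k + \bar{\alpha}\hat{\mathsf{R}}_k$ from \eqref{eq:time_sharing_asymptotic}, the first-order terms match exactly, but the second-order penalty now scales with $\sqrt{\alpha}$ and $\sqrt{\bar{\alpha}}$ rather than with $\alpha$ and $\bar{\alpha}$. Because $\sqrt{x} > x$ on $(0,1)$ and $Q^{-1}$ is positive for error probabilities below $1/2$, this penalty is strictly larger than the asymptotic one, so each $\mathsf{R}_k(\alpha)$ is strictly convex in $\alpha$, with $\mathsf{R}_k(\alpha) < \alpha \tilde{\mathsf{R}}_k + \bar{\alpha}\hat{\mathsf{R}}_k$ for every interior $\alpha \in (0,1)$.

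To translate this into non-convexity of the rate region I examine the midpoint $\tfrac{1}{2}(\tilde{\mathsf{R}} + \hat{\mathsf{R}})$, which is a convex combination of two achievable pairs and which, if the time-sharing region were convex, would have to be dominated coordinate-wise by some $\mathsf{R}(\alpha')$. Pareto incomparability combined with strict convexity of both $\mathsf{R}_1(\alpha)$ and $\mathsf{R}_2(\alpha)$ forbids this: satisfying $\mathsf{R}_1(\alpha') \geq \tfrac{1}{2}(\tilde{\mathsf{R}}_1 + \hat{\mathsf{R}}_1)$ forces $\alpha'$ toward $1$, while the analogous inequality for user $2$ pushes it toward $0$, so no single $\alpha'$ works and the midpoint is unachievable, proving non-convexity. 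The main obstacle I anticipate is uniformly controlling the $o(\sqrt{n})$ remainders from the two sub-blocks so that their sum does not absorb the strict inequality driven by $\sqrt{\alpha} > \alpha$; this is standard provided the underlying second-order expansions admit remainders that are uniform in $\alpha$ bounded away from $0$ and $1$, and once it is in place the $\sqrt{\alpha}$-versus-$\alpha$ mismatch drives the entire conclusion.
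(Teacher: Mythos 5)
Your core calculation is the same as the paper's: splitting $n$ into sub-blocks of size $\alpha n$ and $\bar{\alpha}n$ leaves the first-order terms a convex combination but scales the second-order penalties by $\sqrt{\alpha}$ and $\sqrt{\bar{\alpha}}$ rather than $\alpha$ and $\bar{\alpha}$, and since $\sqrt{x} > x$ on $(0,1)$ each $\mathsf{R}_k(\alpha)$ falls strictly below the chord $\alpha\tilde{\mathsf{R}}_k + \bar{\alpha}\hat{\mathsf{R}}_k$ for interior $\alpha$. Where you genuinely add something is the final step. The paper stops at that pointwise inequality and asserts that it ``leads to a non-convex rate region,'' leaving implicit why a parametrized curve lying below a chord means that the downward closure of the curve omits the chord's midpoint (a priori some other $\alpha'$ might dominate it coordinate-wise). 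You close that gap via Pareto incomparability: if a single $\alpha'$ gave $\mathsf{R}_1(\alpha') \geq \tfrac12(\tilde{\mathsf{R}}_1+\hat{\mathsf{R}}_1)$ and $\mathsf{R}_2(\alpha') \geq \tfrac12(\tilde{\mathsf{R}}_2+\hat{\mathsf{R}}_2)$, then the strict inequality $\mathsf{R}_k(\alpha') < \alpha'\tilde{\mathsf{R}}_k + \bar{\alpha}'\hat{\mathsf{R}}_k$ together with strict monotonicity of the two linear interpolants pushes $\alpha'$ past $1/2$ in opposite directions for $k=1$ and $k=2$, a contradiction. This is a real tightening of the paper's argument, as is your explicit union-bound allocation $\tilde{\epsilon}+\hat{\epsilon}\le\epsilon$ across the sub-blocks, which the paper elides. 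One simplification worth noting: the contradiction as just phrased uses only strict monotonicity of the chords $\alpha\mapsto\alpha\tilde{\mathsf{R}}_k+\bar{\alpha}\hat{\mathsf{R}}_k$ and the strict pointwise gap, so you do not actually need strict convexity or monotonicity of $\mathsf{R}_k(\alpha)$ itself; the only place the $o(\sqrt n)$ remainder must be controlled is at the finitely many interior $\alpha'$ considered, which relaxes the uniform-in-$\alpha$ control you flag as a concern.
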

\begin{proof}
Consider two second-order rate pairs
\begin{align}
	(\tilde{\mathsf{R}}_1, \tilde{\mathsf{R}}_2) = \left(\tilde{\mathsf{R}}_{\mathrm{FO},1} - \frac{\tilde{\mathsf{R}}_{\mathrm{SO},1}}{\sqrt{n}}, \tilde{\mathsf{R}}_{\mathrm{FO},2} - \frac{\tilde{\mathsf{R}}_{\mathrm{SO},2}}{\sqrt{n}}\right)
\end{align}
and 
\begin{align}
	(\hat{\mathsf{R}}_1, \hat{\mathsf{R}}_2) = \left(\hat{\mathsf{R}}_{\mathrm{FO},1} - \frac{\hat{\mathsf{R}}_{\mathrm{SO},1}}{\sqrt{n}}, \hat{\mathsf{R}}_{\mathrm{FO},2} - \frac{\hat{\mathsf{R}}_{\mathrm{SO},2}}{\sqrt{n}}\right),
\end{align}
where we use the subscripts "FO" and "SO" to denote the first-order term and the part of the second-order term which is independent of $n$.
By using time sharing as described above, i.e., using two sub-blocks of lengths $\alpha n$ and $\bar{\alpha}n$, we can achieve the rates
\begin{align}
	&\left(\alpha\tilde{\mathsf{R}}_{\mathrm{FO},1} + \bar{\alpha}\hat{\mathsf{R}}_{\mathrm{FO},1} - \frac{\alpha\tilde{\mathsf{R}}_{\mathrm{SO},1}}{\sqrt{\alpha n}} - \frac{\bar{\alpha}\hat{\mathsf{R}}_{\mathrm{SO},1}}{\sqrt{\bar{\alpha}n}}, \alpha\tilde{\mathsf{R}}_{\mathrm{FO},2} + \bar{\alpha}\hat{\mathsf{R}}_{\mathrm{FO},2} - \frac{\alpha\tilde{\mathsf{R}}_{\mathrm{SO},2}}{\sqrt{\alpha n}} - \frac{\bar{\alpha}\hat{\mathsf{R}}_{\mathrm{SO},2}}{\sqrt{\bar{\alpha}n}}\right)
\end{align}
Note that the scaling by $\alpha$ and $\bar{\alpha}$ also affects the denominators of the second-order terms, as they are dependent on the blocklength, unlike the first-order term. Consider the rates for user 1: 
\begin{align}
	\alpha\tilde{\mathsf{R}}_{\mathrm{FO},1} + \bar{\alpha}\hat{\mathsf{R}}_{\mathrm{FO},1} - \frac{\alpha\tilde{\mathsf{R}}_{\mathrm{SO},1}}{\sqrt{\alpha n}} - \frac{\bar{\alpha}\hat{\mathsf{R}}_{\mathrm{SO},1}}{\sqrt{\bar{\alpha}n}} &= \alpha\tilde{\mathsf{R}}_{\mathrm{FO},1} + \bar{\alpha}\hat{\mathsf{R}}_{\mathrm{FO},1} - \frac{\sqrt{\alpha}\tilde{\mathsf{R}}_{\mathrm{SO},1}}{\sqrt{n}} - \frac{\sqrt{\bar{\alpha}}\hat{\mathsf{R}}_{\mathrm{SO},1}}{\sqrt{n}} \\
	&\leq \alpha\tilde{\mathsf{R}}_{\mathrm{FO},1} + \bar{\alpha}\hat{\mathsf{R}}_{\mathrm{FO},1} - \frac{\alpha\tilde{\mathsf{R}}_{\mathrm{SO},1}}{\sqrt{n}} - \frac{\bar{\alpha}\hat{\mathsf{R}}_{\mathrm{SO},1}}{\sqrt{n}} \\
	&= \alpha \tilde{\mathsf{R}}_1 + \bar{\alpha}\hat{\mathsf{R}}_1,
\end{align}
where the inequality comes from the fact that $\alpha \leq 1$ and $\bar{\alpha} \leq 1$.
The same holds for user 2. Therefore, the achievable rates by time sharing for both users are smaller than the convex combination $\alpha \tilde{\mathsf{R}}_k + \bar{\alpha}\hat{\mathsf{R}}_k$ of the two second-order rates, which leads to a non-convex rate region.
\end{proof}
Theorem \ref{theorem:time_sharing_fbl} states that, in contrast to the asymptotic case, time sharing cannot guarantee that all convex combinations of two achievable second-order rate pairs are also achievable.
Fig. \ref{fig:time_sharing_vs_n} shows what time sharing between the single user second-order rates
\begin{align}
	R_{k,0} := \mathsf{C}(\mathsf{P}) - \sqrt{\frac{\mathsf{V}_{\mathrm{G}}(\mathsf{P})}{n}}, \qquad k = 1,2
\end{align}
looks like for different blocklengths in a symmetric GBC with $h_1 = h_2 = 1$, $\mathsf{P} = 10$ and $\epsilon = 10^{-6}$. All rates are normalized w.r.t. these single user rates, such that the rates for the different blocklengths are comparable more easily. We vary $\alpha$ from zero to one with step size 0.05, each point on the curves represents one value of $\alpha$. It can be observed that for finite blocklengths, time sharing results in a non-convex rate region, unlike in the asymptotic case, where the convex combination of all achievable rate pairs can be achieved. The curves deviate more from the asymptotic the smaller the blocklength is, since for smaller blocklengths, the second-order term is becoming more dominant.

\begin{figure}[ht]
	\begin{center}
		\input{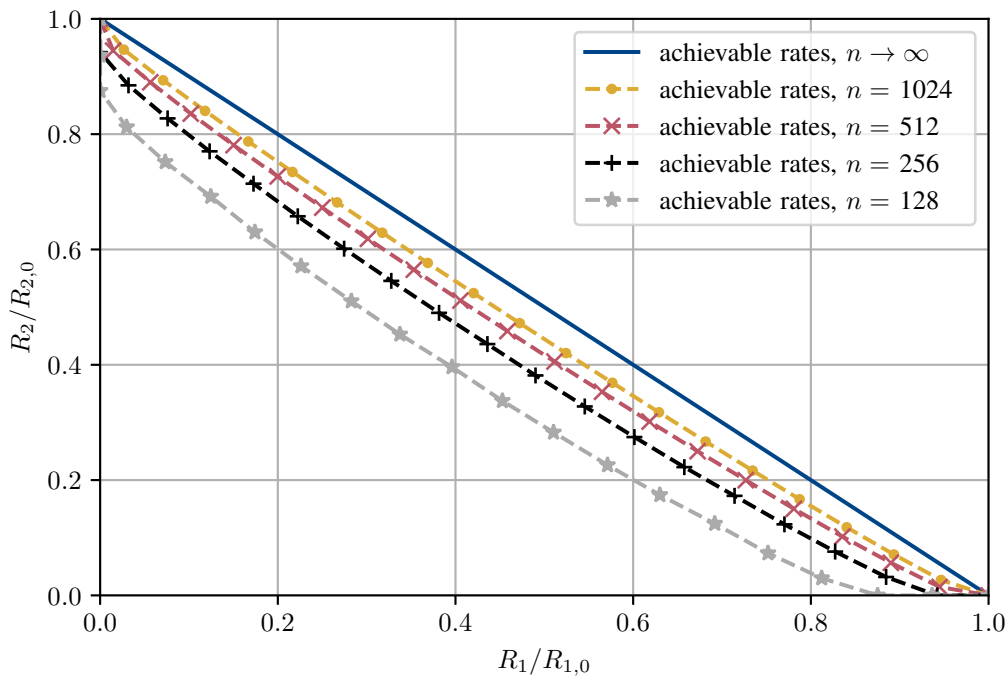}
	\end{center}
	\caption{Achievable rate region through time sharing between the single-user rates $R_{1,0}$ and $R_{2,0}$, normalized to the single user rates, at different blocklengths, for $h_1 = h_2 = 1$, $\mathsf{P} = 10$ and $\epsilon = 10^{-6}$.}
	\label{fig:time_sharing_vs_n}
\end{figure}

Note that even the second-order approach to time sharing presented in Fig. \ref{fig:time_sharing_vs_n} has its limitations. For $\alpha$ close to 1 or 0, the lengths of the considered sub-blocks are very small and can be in the order of magnitude of only a few bits. For these blocklengths, the Gaussian approximation is not very accurate and therefore, our second-order time sharing results are not very reliable for these values of $\alpha$. The most reliable results are from the region where $\alpha$ is close to 0.5, where the two sub-blocks still have a considerable blocklength.

\section{Numerical Results}\label{sec:numerical}
\subsection{Outer Bound}
In this section, we compare the sum rate upper bounds from Theorem \ref{theorem:heterogeneous_sato_outer_bound} and Corollary \ref{theorem:homogeneous_sato_outer_bound} to the achievable sum rates under the SPC that were derived in \cite{Lin.2021b, Lin.2022}. Therefore, in Fig. \ref{fig:ub_comparison_n1}, we consider a HB-GBC with sum power constraint $\mathsf{P} = 10$, $\epsilon = 2\cdot 10^{-6}$, channel gains $h_1 = 1$ and $h_2 \in \{1.5, 10\}$, fixed blocklength ratio $p=n_2/n_1=0.9$ and we vary $n_1 \in [128, 2048]$. We assume $\epsilon_1 = 10^{-6}$, $\epsilon_{\mathrm{SIC},1} = \epsilon_{\mathrm{SIC},2} = 5 \cdot 10^{-7}$ for the achievable rate expressions from \cite{Lin.2021b}.
\begin{figure}[ht]
	\begin{center}
		\scalebox{1}{\input{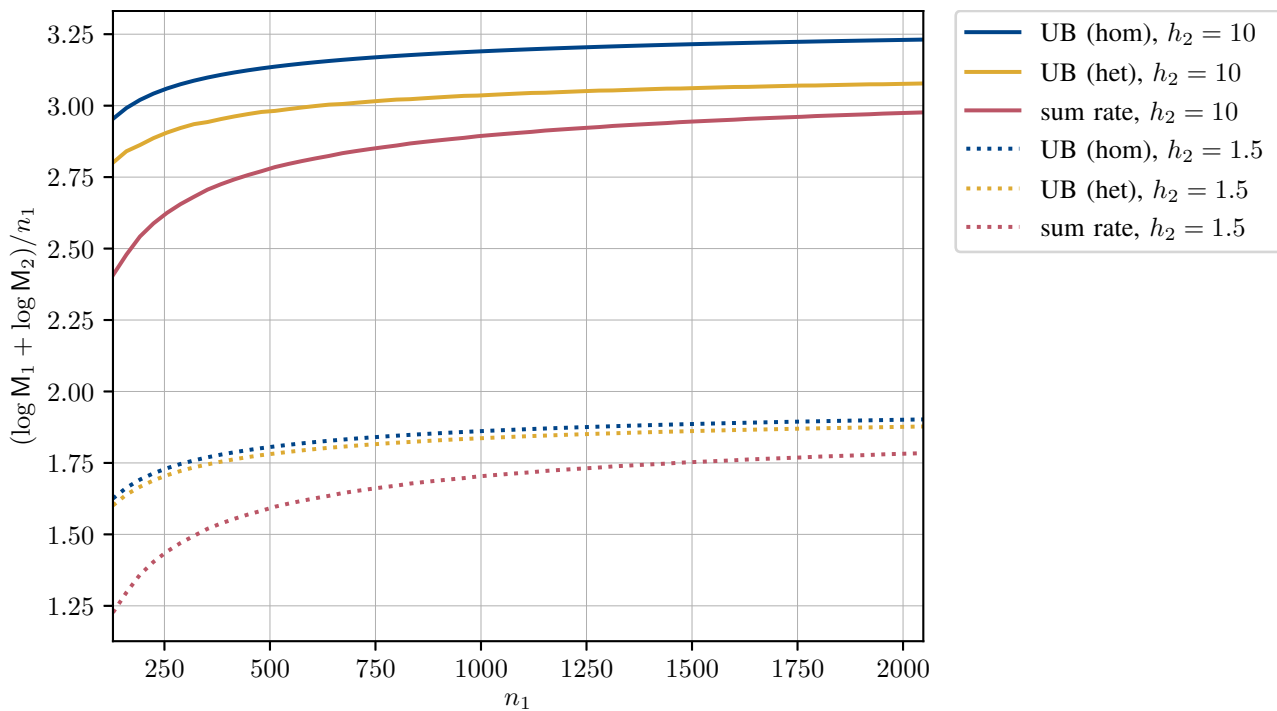}}
	\end{center}
	\caption{Comparison of the sum rate upper bounds from Theorem \ref{theorem:heterogeneous_sato_outer_bound} (het) and Corollary \ref{theorem:homogeneous_sato_outer_bound} (hom) with the achievable sum rates based on \cite{Lin.2021b} for two channels with $h_2 = 1.5$ and $h_2 = 10$.}
	\label{fig:ub_comparison_n1}
\end{figure}

We can observe that for smaller $h_2$, the homogeneous upper bound and the heterogeneous upper bound are closer to each other and the heterogeneous outer bound always outperforms the homogeneous outer bound, which confirms the observation in the end of Section \ref{sec:outer_bound}. The gap between achievable rates and outer bounds is decreasing for larger $n_1$, but does not vanish, since Sato's outer bound is not tight even in the asymptotic regime.

In Fig. \ref{fig:rate_region}, we compare the complete outer bounds from Theorem \ref{theorem:heterogeneous_sato_outer_bound} and Corollary \ref{theorem:homogeneous_sato_outer_bound} to the achievable rate regions from the achievability schemes \textit{early decoding} and \textit{hybrid NOMA} from \cite{Lin.2021b}. We keep most of the system parameters the same as those in Figure \ref{fig:ub_comparison_n1}, except $h_2 = 50$, $n_1 =1024$ and $n_2 = 840$. The maximal sum rate is achieved at $(0.22, 3.48)$ because of the channel advantage of user 2. Note that it is not possible to simply combine the two achievable schemes into a convex rate region because time sharing is not available as a tool for convexification in the finite blocklength regime (Section \ref{sec:time_sharing}). Again, the heterogeneous outer bound is tighter than the homogeneous outer bound. The gap between the achievable sum rates and the outer bound is smaller when they are close to the single user rates. The sum rate upper bound is looser than the single-user rate upper bounds due to the assumption of a cooperative receiver for the sum rate upper bound.
\begin{figure}[ht]
	\begin{center}
		\scalebox{1}{\input{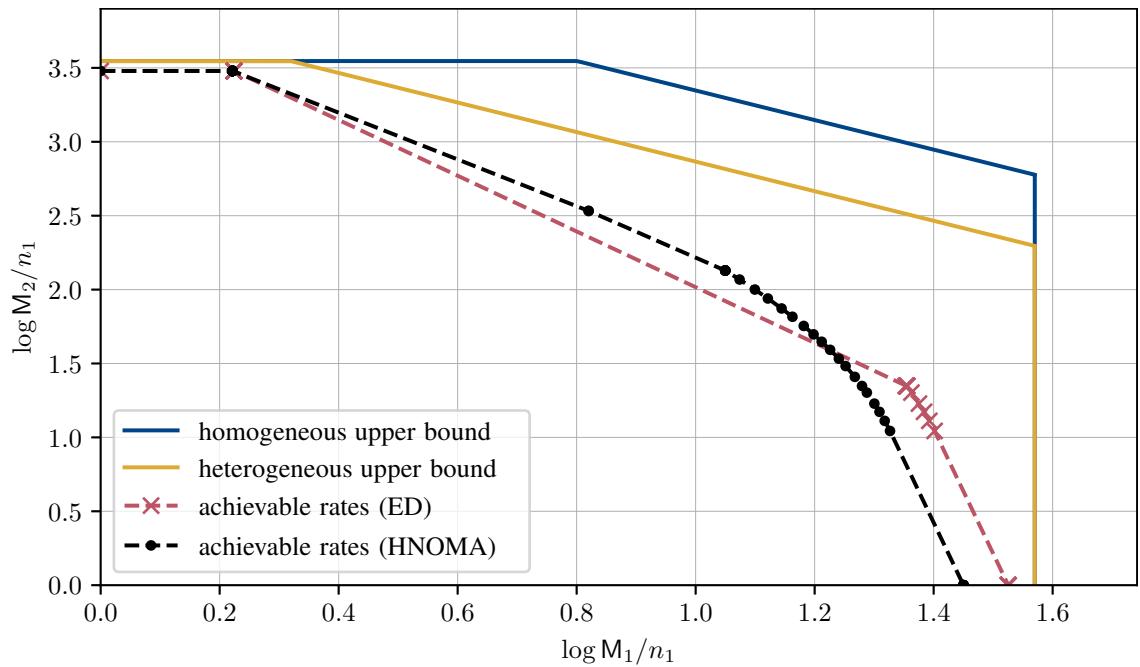}}
	\end{center}
	\caption{Achievable rate regions using Early Decoding (ED) and Hybrid NOMA (HNOMA) as in \cite{Lin.2021b} compared to our outer bounds under the SPC.}
	\label{fig:rate_region}
\end{figure}

\subsection{Latency Reduction}
We now turn to the HB-GBC with IPC and compare the number of necessary symbols $n_2$ for a successful early decoding from Theorem \ref{theorem:ed_symbols} ("Shell") to that from \cite{Lin.2021} ("i.i.d.") and the asymptotic result from \cite{Azarian.2005}, which is also considered in \cite{Lin.2021}. The considered system parameters are $\epsilon_1 = \epsilon_2 = 10^{-6}$, $h_1 = 1$, $\mathsf{P}_1 = 8$ and $\mathsf{P}_2 = 0.2$ and $n_1$ is varied. We numerically search for the best value of $\epsilon_{\mathrm{SIC},1} \in (0, \epsilon_2]$.
\begin{figure}[ht]
	\begin{center}
		\scalebox{0.8}{\input{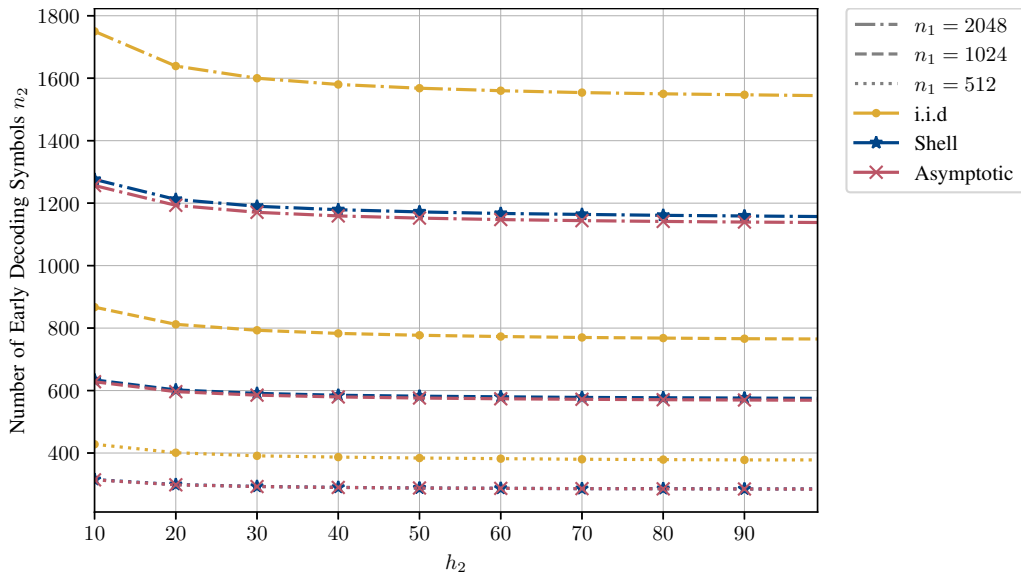}}
	\end{center}
	\caption{Number of necessary symbols for a successful early decoding at different blocklengths for the i.i.d. Gaussian inputs \cite{Lin.2021}, shell inputs for user 1 (Theorem \ref{theorem:ed_symbols}) and the asymptotic analysis \cite{Azarian.2005}.}
	\label{fig:ed_symbols}
\end{figure}

In Fig. \ref{fig:ed_symbols}, we observe that the improvement of Theorem \ref{theorem:ed_symbols} over \cite{Lin.2021} is significant. Our results are very close to the asymptotic limit for all considered blocklengths. From these results, we can conclude that early decoding is a promising technique even in the finite blocklength regime, especially when shell codes are used.

\section{Conclusion}
In this paper, we have developed a novel approach of deriving Sato-type outer bounds in the finite blocklength regime. We have applied this technique to the Gaussian broadcast channel with heterogeneous blocklength constraints using two different bounding approaches. We have also improved previous achievability results based on a technique called \textit{early decoding} by using a composite shell code for the user with the looser blocklength constraint. Numerical results show a significant improvement in terms of latency reduction over the previous early decoding result, and our results are now very close to the asymptotic limit.

\appendix
\subsection{Proof of Lemma \ref{lemma:outer_bound_lemma}}\label{sec:outer_bound_lemma_proof}
According to \eqref{eq:smp_error_prob_exact}, joint distributions with the same marginals affect the probability of a simultaneous decoding error $\hbanderror$, but not the individual error probabilities $\hbinderror{1}$ and $\hbinderror{2}$. Therefore, if we consider a new channel $V \in \mathcal{W}_{\mathrm{SM}}(W)$ with the same average system error probability constraint $\epsilon$, $\mathcal{C}_{V}(n_1, n_2, \mathcal{F}^{n_1}, \epsilon)$ may be larger or smaller than $\mathcal{C}_{W}(n_1, n_2, \mathcal{F}^{n_1}, \epsilon)$. If, e.g., the joint error probability $\hbanderrortilde$ of the new channel is smaller than $\hbanderror$, then by \eqref{eq:smp_error_prob_exact}, we have to reduce $\hbinderror{1} + \hbinderror{2}$ in order to fulfill the same error probability constraint $\epsilon$. However, reducing the individual error probabilities makes the $\mathcal{C}_{V}(n_1, n_2, \mathcal{F}^{n_1}, \epsilon)$ smaller and therefore, we cannot simply use it as an outer bound on $\mathcal{C}_{W}(n_1, n_2, \mathcal{F}^{n_1}, \epsilon)$. As a result, to derive an outer bound on $\mathcal{C}_{W}(n_1, n_2, \mathcal{F}^{n_1}, \epsilon)$, we have to ensure that both the individual and total error probabilities are at least as large as those in the original channel $W$. 

When transmitting through $W$, the individual error probabilities of the capacity-achieving code cannot be larger than $\epsilon$, since otherwise, the error probability constraint \eqref{eq:total_error_prob} would be violated. Therefore, we now fix the individual error probabilities of the new channel as $\hbinderrortilde{1} = \hbinderrortilde{2} = \epsilon$, and by \eqref{eq:smp_error_prob_exact}, we get
	\begin{align}
		\hbtoterrortilde &= \hbinderrortilde{1} + \hbinderrortilde{2} - \hbanderrortilde \geq \epsilon, \label{eq:tilde_error_lb}
	\end{align}
	since $\hbanderrortilde \leq \min\{\hbinderrortilde{1}, \hbinderrortilde{2}\} = \epsilon$. Thus, we guarantee that the average system error probability of the new channel is at least as large as that of the original channel, which results in an outer bound on $\mathcal{C}_{W}(n_1, n_2, \mathcal{F}^{n_1}, \epsilon)$. 
	
	On the other side, since $\hbanderrortilde \geq 0$, we have $\hbtoterrortilde \leq 2\epsilon.$
	Taking $\hbtoterrortilde = 2\epsilon$ gives us a uniform upper bound of $\hbtoterrortilde$ w.r.t. all possible $\hbanderrortilde$, so we can guarantee that
	\begin{align}
		\mathcal{C}_{W}(n_1, n_2, \mathcal{F}^{n_1}, \epsilon) \subseteq \mathcal{C}_{V}(n_1, n_2, \mathcal{F}^{n_1}, 2\epsilon).
	\end{align}

\subsection{Proof of Theorem \ref{theorem:heterogeneous_sato_outer_bound}}\label{sec:proof_het_sato}
The upper bounds on the individual rates directly follow from the single-user rate upper bounds \cite[Theorem 65]{Polyanskiy.2010}.

For the sum rate upper bound, we use the modified information spectrum approach presented in \cite{MolavianJazi.2014}. We consider a GBC where the two noise terms are correlated via a correlation $\rho \in [0,1]$. The modified information density for the cooperative receiver in this scenario is

\begin{IEEEeqnarray}{rCl}
    \tilde{i}(x^{n_1}; Y_1^{n_1}, Y_2^{n_2}) &=& \sum_{i=1}^{n_2} \log\frac{dP_{Y_1Y_2|X}(Y_{1,i},Y_{2,i}|x_i)}{dQ_{Y_1Y_2}(Y_{1,i},Y_{2,i})} + \sum_{i=n_2+1}^{n_1}\log\frac{dP_{Y_{1}|X}(Y_{1,i}|x_i)}{dQ_{Y_1}(Y_{1,i})}, \label{eq:id_het_bl}
\end{IEEEeqnarray}
where $P_{Y_1Y_2|X}$ is the channel distribution and $Q_{Y_1Y_2}$ is the output distribution induced by i.i.d. Gaussian inputs. Its normalized expectation is
\begin{align}
	\mathbb{E}\left[\frac{1}{n_1}\tilde{i}(x^{n_1}; Y_1^{n_1}, Y_2^{n_2})\right] = \mathsf{C}_{\rho,1} + \frac{\mathsf{C}_{\rho,2}}{n_1}\cdot\sum_{i=n_2+1}^{n_1}x_i^2,
\end{align}
where 
\begin{align}
	\mathsf{C}_{\rho,1} := p\mathsf{C}\left(h_{\rho}\mathsf{P}\right)  + \bar{p}\mathsf{C}\left(h_1\mathsf{P}\right) + \frac{\bar{p}\log e}{2}\left(\frac{h_{\rho}}{1 + h_{\rho}\mathsf{P}} - \frac{h_1}{1+h_1 \mathsf{P}}\right)\mathsf{P}\label{eq:fo_term}
\end{align}
and 
\begin{align}
	\mathsf{C}_{\rho,2} &:= \frac{\log e}{2}\left(\frac{h_1}{1+h_1\mathsf{P}} - \frac{h_{\rho}}{1 + h_{\rho}\mathsf{P}}\right). \label{eq:c_rho_2}
\end{align}
Here, we have introduced the abbreviation $h_{\rho} := (h_1+h_2-2\rho\sqrt{h_1h_2})/(1-\rho^2)$. The variance of \eqref{eq:id_het_bl} is
\begin{align}
	\mathrm{Var}\left[\frac{1}{n_1}\tilde{i}(x^{n_1}; Y_1^{n_1}, Y_2^{n_2})\right] &= \frac{\mathsf{V}_{\rho,1}}{n_1} + \frac{\mathsf{V}_{\rho,2}}{n_1^2} \cdot \sum_{i=n_2+1}^{n_1}x_i^2,
\end{align}
where 
\begin{align}
	\mathsf{V}_{\rho,1} :=& \frac{\log^2 e}{4}\left(\frac{p\cdot2h_{\rho}^2\mathsf{P}^2+ 4h_{\rho}\mathsf{P}}{(1 + h_{\rho}\mathsf{P})^2 } + \bar{p}\frac{2h_1^2\mathsf{P}^2}{(1+h_1\mathsf{P})^2} \right)
\end{align}
and 
\begin{IEEEeqnarray}{rCl}
	\mathsf{V}_{\rho,2} &:=& \frac{\log^2 e}{4}\left(\frac{4h_1}{(1+h_1\mathsf{P})^2} - \frac{4h_{\rho}}{(1+ h_{\rho}\mathsf{P})^2}\right).
\end{IEEEeqnarray}

We now employ the information spectrum converse \cite{Hayashi.2009, MolavianJazi.2014} to our setup, which states that every $(n_1, \mathsf{M}, \epsilon, \mathcal{F}^{n_1})$-code has to satisfy
\begin{align}
	\epsilon \geq \mathrm{Pr}\left[\tilde{i}(X^{n_1}; Y_1^{n_1}, Y_2^{n_2}) \leq \log \gamma_{n_1}\right] - \frac{\gamma_{n_1}}{\mathsf{M}}, \label{eq:info_spec_converse}
\end{align}
where for the cooperative receiver, we have $\mathsf{M} = \mathsf{M}_1\mathsf{M}_2$. We set  $\gamma_{n_1} = \mathsf{M}/\sqrt{n_1}$ and lower bound the probability expression in \eqref{eq:info_spec_converse} using the Berry-Esseen Theorem \cite[Theorem 44]{Polyanskiy.2010} by
\begin{IEEEeqnarray}{rl}
	&\mathrm{Pr}\left[\frac{1}{n_1}\tilde{i}(x^{n_1}; Y_1^{n_1}, Y_2^{n_2})  \leq \frac{\mathsf{M}/\sqrt{n_1}}{n_1}\right] \geq Q\Bigg(\underbrace{\frac{\mathsf{C}_{\rho,1} + \frac{\mathsf{C}_{\rho,2}}{n_1}\cdot\sum_{i=n_2+1}^{n_1}x_i^2 - \log\left(\frac{\mathsf{M}}{\sqrt{n_1}}\right)/{n_1}}{\sqrt{\mathsf{V}_{\rho,1} + \frac{\mathsf{V}_{\rho,2}}{n_1}\cdot\sum_{i=n_2+1}^{n_1}x_i^2}/{n_1}}}_{=:r_{m,\rho}(n_1)}\Bigg) - \frac{\mathsf{B}_1}{\sqrt{n_1}}. \label{eq:multi_user_info_spectrum_corr}\nonumber\\\label{eq:converse_prob_lb}
\end{IEEEeqnarray}
We observe that $\mathsf{C}_{\rho,2}$ is negative and $\mathsf{V}_{\rho,2}$ is positive if $\mathsf{P}^2 \geq \frac{1}{h_1h_{\rho}}$.
We can now upper bound $r_{\rho,m}(n_1)$ by using $\sum_{i=n_2+1}^{n_1}x_i^2 \geq 0$ in the numerator and $0 \leq \sum_{i=n_2+1}^{n_1}x_i^2 \leq (n_1-n_2)\mathsf{P}$ in the denominator:
\begin{IEEEeqnarray}{rCl}
	r_{\rho,m}(n_1) &\leq&  \frac{\mathsf{C}_{\rho,1} - \log(\mathsf{M}/\sqrt{n_1})/{n_1}}{\sqrt{\mathsf{V}_{\rho,1} + \mathsf{V}_{\rho,2}\cdot (1-p)\mathsf{P}\cdot \mathbbm{1}\left(\mathsf{P}^2 < \frac{1}{h_1h_{\rho}}\right)}}, \IEEEeqnarraynumspace \label{eq:r_ub}
\end{IEEEeqnarray}
where $\mathbbm{1}(\cdot)$ is the indicator function. Inserting \eqref{eq:r_ub} into \eqref{eq:converse_prob_lb} and using \eqref{eq:converse_prob_lb} to upper bound \eqref{eq:info_spec_converse}, we get
\begin{align}
	2\epsilon \geq Q\left(\frac{\mathsf{C}_{\rho,1} - \log(\mathsf{M}/\sqrt{n_1})/{n_1}}{\sqrt{\mathsf{V}_{\rho,1} + \mathsf{V}_{\rho,2}\cdot (1-p)\mathsf{P}\cdot \mathbbm{1}\left(\mathsf{P}^2 < \frac{1}{h_1h_{\rho}}\right)}}\right) - \frac{\bar{\mathsf{B}}}{\sqrt{n_1}}, 
\end{align}
where $\bar{\mathsf{B}} := \mathsf{B}_1 +1$.  Note that we have increased the error probability constraint from $\epsilon$ to $2\epsilon$, because this will later enable us to use Lemma \ref{lemma:outer_bound_lemma}. By solving for the number of messages, we find
\begin{align}
	&\log \mathsf{M}_1 + \log \mathsf{M}_2 \nonumber\\
	&\leq n_1\mathsf{C}_{\rho,1} - \sqrt{n_1\left(\mathsf{V}_{\rho,1} + \mathsf{V}_{\rho,2}\cdot (1-p)\mathsf{P}\cdot \mathbbm{1}\left(\mathsf{P}^2 < \frac{1}{h_1h_{\rho}}\right)\right)}Q^{-1}\left(\epsilon +\frac{\bar{\mathsf{B}}}{\sqrt{n_1}}\right) + \frac{1}{2} \log n_1 \\
	&= n_1\mathsf{C}_{\rho,1} - \sqrt{n_1\left(\mathsf{V}_{\rho,1} + \mathsf{V}_{\rho,2}\cdot (1-p)\mathsf{P}\cdot \mathbbm{1}\left(\mathsf{P}^2 < \frac{1}{h_1h_{\rho}}\right)\right)}Q^{-1}(\epsilon) + \frac{1}{2} \log n_1 + O\left(1\right), \label{eq:ub_with_rho}
\end{align}
where the last step follows from the Taylor expansion of $Q^{-1}$. 

Lemma \ref{lemma:outer_bound_lemma} now enables us to use the channel distribution (or, equivalently, the correlation $\rho$) that minimizes the first-order term \eqref{eq:fo_term} of \eqref{eq:ub_with_rho}. We consider the individual summands separately.

\begin{enumerate}
	\item Since $\mathsf{C}(\cdot)$ is a monotonically increasing function, we can minimize the argument of $\mathsf{C}(h_{\rho}\mathsf{P})$ as follows:
    	\begin{align}
    		\min_{\rho \in [0,1)} h_{\rho}\mathsf{P} = \min_{\rho \in [0,1)}\frac{h_1 + h_2-2\rho\sqrt{h_1h_2}}{1-\rho^2}\mathsf{P}. \label{eq:fo_term_minim_part1}
    	\end{align}
    Its derivative is
    \begin{align}
    	\frac{d}{d\rho} h_{\rho} = \frac{-2\sqrt{h_1h_2}\rho^2 + 2(h_1+h_2)\rho-2\sqrt{h_1h_2}}{(1-\rho^2)^2}
    \end{align}
    and the zeros of the derivatives are
    \begin{align}
    	\rho_1 = \sqrt{\frac{h_1}{h_2}}\qquad \text{and} \qquad \rho_2 = \sqrt{\frac{h_2}{h_1}}.
    \end{align}
    Since we know that $h_2 \geq h_1$, we have $\rho_2 \geq 1$, which is outside of the feasible domain for $\rho$. It can be verified that $h_{\rho_1} = h_2$ and
    \begin{align}
    	\left.\frac{d^2}{d\rho^2} h_{\rho}\right|_{\rho=\rho_1} > 0,
    \end{align}
    i.e., $\rho_1$ is indeed the minimum.
	
	\item the term 
		\begin{align}
			 \frac{1-p}{2}\log \left[1+h_1\mathsf{P}\right]
		\end{align}is independent of $\rho$,
		\item for the third term
		\begin{align}
			\log e\frac{1-p}{2}\left(\frac{h_1+h_2-2\rho\sqrt{h_1h_2}}{1-\rho^2 + \mathsf{P}(h_1+h_2-2\rho\sqrt{h_1h_2})} - \frac{h_1}{h_1\mathsf{P}+1}\right)\mathsf{P} = 0, \label{eq:third_summand_sato}
		\end{align}
		we define $h_{12,\mathsf{P}}(\rho) := \frac{h_1+h_2-2\rho\sqrt{h_1h_2}}{1-\rho^2 + \mathsf{P}(h_1+h_2-2\rho\sqrt{h_1h_2})}$, which is the only part of \eqref{eq:third_summand_sato} which is dependent on $\rho$, and find
		\begin{align}
			\frac{d}{d\rho} h_{12,\mathsf{P}}(\rho) = \frac{-2\sqrt{h_1h_2}\rho^2 + 2(h_1+h_2)\rho-2\sqrt{h_1h_2}}{\left(1-\rho^2 + \mathsf{P}(h_1+h_2-2\rho\sqrt{h_1h_2})\right)^2}
		\end{align}
	and the zeros of that derivative are 
	\begin{align}
		\rho_1 = \sqrt{\frac{h_1}{h_2}}\qquad \text{and} \qquad \rho_2 = \sqrt{\frac{h_2}{h_1}}.
	\end{align}
	Again, only $\rho_1$ is a feasible value for $\rho$ and since $\left.\frac{d^2}{d\rho^2} h_{12,\mathsf{P}}(\rho)\right|_{\rho=\rho_1} > 0$, it is a minimum.
	\end{enumerate}
	Since $\rho = \sqrt{\frac{h_1}{h_2}}$ minimizes all three terms individually, it also minimizes the first-order term \eqref{eq:fo_term} as a whole. Inserting $\rho = \sqrt{\frac{h_1}{h_2}}$ into \eqref{eq:ub_with_rho}, we obtain the values $\mathsf{C}_{\mathrm{s}}^*(h_1, h_2, p, \mathsf{P})$ and $\mathsf{V}_{\mathrm{s}}^*(h_1, h_2, p, \mathsf{P})$ from \eqref{eq:converse_sato_fo_term_opt} and \eqref{eq:converse_sato_variance_opt}.

\subsection{Proof of Theorem \ref{theorem:ed_symbols}}\label{sec:proof_ed_symbols}
\subsubsection{Number of symbols for a successful early decoding}
The proof is based on the modified Dependence Testing bound introduced in \cite[Theorem 3]{MolavianJazi.2015}, which we restate in the following for the single-user case:
\begin{theorem}\label{theorem:dt-modified}
	For a general point-to-point channel, any input distribution $P_{X^n}$ and any output distribution $Q_{Y^n}$, there exists an $(n, \mathsf{M}, \epsilon, \mathcal{F}^n)$-code that satisfies
	\begin{align}
		\epsilon \leq &\Pr \left[\tilde{i}(X^n;Y^n) \leq \log \gamma_n\right] + \mathsf{K}_n \mathsf{M}\Pr\left[\tilde{i}(X^n;\bar{Y}^n) > \log \gamma_n\right] + \Pr[X^n\notin \mathcal{F}^n]\label{eq:dt_mod}
	\end{align}
	where $P_{X^nY^n\bar{Y}^n}(a,b,c) = P_{X^n}(a)P_{Y^n|X^n}(b|a)Q_{Y^n}(c)$, $\gamma_n$ is a positive threshold and the coefficient $\mathsf{K}_n$ is defined over the Radon-Nikodym derivative
	\begin{align}
		\mathsf{K}_n := \sup_{y^n \in \mathcal{Y}^n} \frac{dP_{Y^n}(y^n)}{dQ_{Y^n}(y^n)}.
	\end{align}
\end{theorem}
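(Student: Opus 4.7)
The plan is to mimic the classical Dependence-Testing bound of Polyanskiy-Poor-Verd\'u but with an auxiliary output distribution $Q_{Y^n}$ in place of the actual induced output distribution $P_{Y^n}$, paying for the mismatch through the uniform Radon-Nikodym bound $\mathsf{K}_n$. First I would set up random coding: draw $M$ codewords $X^n(1),\dots,X^n(M)$ i.i.d.\ from $P_{X^n}$, and use a threshold decoder based on the modified information density, declaring $\hat{m}$ only if $\tilde{i}(X^n(\hat{m});y^n)>\log\gamma_n$ and no other index satisfies this inequality (with any fixed tie-breaking rule on ties or no-exceedance). Since $P_{X^n}$ is not assumed supported on $\mathcal{F}^n$, I would treat the event $\{X^n \notin \mathcal{F}^n\}$ as a separate failure, contributing the last term $\Pr[X^n \notin \mathcal{F}^n]$ to \eqref{eq:dt_mod}.

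Next I would split the average decoding error into the two familiar components. The first component is the event that the transmitted codeword itself fails the threshold test, which by symmetry under random coding has probability exactly $\Pr[\tilde{i}(X^n;Y^n)\leq \log\gamma_n]$ under the joint law $P_{X^nY^n}=P_{X^n}P_{Y^n\mid X^n}$. The second component is that some competing codeword $X^n(m')$, $m'\neq m$, exceeds the threshold. Applying the union bound over the $M-1$ competitors and using the fact that each competitor is independent of $Y^n$, this is upper bounded by $M\,\Pr_{P_{X^n}\otimes P_{Y^n}}[\tilde{i}(\bar X^n;Y^n)>\log\gamma_n]$, where $P_{Y^n}$ is the true marginal induced by $P_{X^n}$ through the channel.

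The heart of the modification is now the change of measure from $P_{Y^n}$ to $Q_{Y^n}$ in this second term. Writing
\begin{align*}
\Pr_{P_{X^n}\otimes P_{Y^n}}[\tilde{i}(\bar X^n;Y^n)>\log\gamma_n]
&=\int\!\!\int \mathbb{1}\{\tilde{i}(x^n;y^n)>\log\gamma_n\}\,\frac{dP_{Y^n}}{dQ_{Y^n}}(y^n)\,dQ_{Y^n}(y^n)\,dP_{X^n}(x^n),
\end{align*}
I would pull the supremum $\sup_{y^n} dP_{Y^n}/dQ_{Y^n}(y^n)=\mathsf{K}_n$ outside the integral, obtaining the bound $\mathsf{K}_n\,\Pr_{P_{X^n}\otimes Q_{Y^n}}[\tilde{i}(X^n;\bar Y^n)>\log\gamma_n]$. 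Combining this with the first component and the feasibility term gives exactly \eqref{eq:dt_mod}; a standard expurgation/selection argument then converts the average-over-codebook bound into the existence of a single deterministic $(n,\mathsf{M},\epsilon,\mathcal{F}^n)$-code.

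The main obstacle I anticipate is purely a bookkeeping one: ensuring that the analysis stays clean when $P_{X^n}$ places positive mass outside $\mathcal{F}^n$ (so that the DT-style random-coding argument is meaningful), verifying that the Radon-Nikodym derivative $dP_{Y^n}/dQ_{Y^n}$ is well-defined and essentially bounded by $\mathsf{K}_n$ (i.e.\ $P_{Y^n}\ll Q_{Y^n}$, which should be immediate when $Q_{Y^n}$ has full support such as an i.i.d.\ Gaussian product), and choosing the tie-breaking convention so that the union-bound step produces the factor $M$ rather than something larger. No nontrivial concentration or typicality argument is needed, since the bound is in information-spectrum form and only the two probabilities and the constant $\mathsf{K}_n$ appear on the right-hand side.
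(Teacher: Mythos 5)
Your proposal is correct: the random-coding argument with a threshold decoder, the union bound over the $\mathsf{M}-1$ competing codewords, the change of measure from $P_{Y^n}$ to $Q_{Y^n}$ paid for by $\mathsf{K}_n = \sup_{y^n} dP_{Y^n}/dQ_{Y^n}$, and the additive $\Pr[X^n\notin\mathcal{F}^n]$ term for infeasible codewords is exactly the standard derivation of this bound. Note that the paper itself does not prove this statement but simply restates it from the cited reference \cite{MolavianJazi.2015}, whose proof proceeds along essentially the same lines as yours (your union bound even leaves room to spare, since the smallest-index threshold decoder would yield the sharper factor $(\mathsf{M}-1)/2$ in place of $\mathsf{M}$).
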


First, we derive the uniform distribution on the composite power shell $\mathcal{S}^{(n_2, n_1)}(\mathsf{P})$. Choosing $x^{n_1}$ uniformly from $\mathcal{S}^{(n_2, n_1)}(\mathsf{P})$ is equivalent to first choosing the first $n_2$ symbols uniformly from $\mathcal{S}^{(n_2)}(\mathsf{P})$ and then independently choosing the remaining symbols uniformly from $\mathcal{S}^{(n_1-n_2)}(\mathsf{P})$. In the following, we use the notation $x^{n_2} := [x_1, ..., x_{n_2}]$ and $x^{n_1-n_2} := [x_{n_2+1}, ..., x_{n_1}]$. The distribution of the first $n_2$ symbols is
	\begin{align}
		P_{X^{n_2}}(x^{n_2}) = \frac{\mathbbm{1}(x^{n_2} \in \mathcal{S}^{(n_2)}(\mathsf{P}))}{\mathsf{S}_{n_2}(\sqrt{n_2\mathsf{P}})}, \label{eq:shell_distribution_firstpart}
	\end{align}
	where $\mathsf{S}_n(r):= \frac{2\pi^\frac{n}{2}}{\Gamma \left(\frac{n}{2}\right)}r^{n-1}$, and the distribution of the last $n_1 - n_2$ symbols is
	\begin{align}
		P_{X^{n_1-n_2}}(x^{n_1-n_2}) = \frac{\mathbbm{1}(x^{n_1-n_2} \in \mathcal{S}^{(n_1-n_2)}(\mathsf{P}))}{\mathsf{S}_{n_1 - n_2}(\sqrt{(n_1-n_2)\mathsf{P}})}.\label{eq:shell_distribution_secondpart}
	\end{align}
	Since the choice of the last $n_1 - n_2$ symbols is independent of the choice of the first $n_2$ symbols, the total distribution of $x^{n_1}$ is
	\begin{align}
		P_{X^n}(x^{n_1}) = P^{\mathrm{Unif}}_{\mathcal{S}^{(n_2, n_1)}(\mathsf{P})}(x^{n_1}) := \frac{\mathbbm{1}(x^{n_2}\in \mathcal{S}^{(n_2)}(\mathsf{P}))\cdot\mathbbm{1}(x^{n_1-n_2} \in \mathcal{S}^{(n_1-n_2)}(\mathsf{P}))}{\mathsf{S}_{n_2}(\sqrt{n_2\mathsf{P}})\mathsf{S}_{n_1-n_2}(\sqrt{(n_1-n_2)\mathsf{P}})},  \label{eq:concatenated_shell_distribution_complicated}
	\end{align}
	By definition of $\mathcal{S}^{(n_2, n_1)}(\mathsf{P})$, we have
	\begin{align}
		\mathbbm{1}(x^{n_1} \in \mathcal{S}^{(n_2, n_1)}(\mathsf{P})) =\mathbbm{1}(x^{n_2} \in \mathcal{S}^{(n_2)}(\mathsf{P}))\cdot\mathbbm{1}(x^{n_1-n_2} \in \mathcal{S}^{(n_1-n_2)}(\mathsf{P})), 
	\end{align}
	so we can simplify the expression \eqref{eq:concatenated_shell_distribution_complicated} as
	\begin{align}
		P^{\mathrm{Unif}}_{\mathcal{S}^{(n_2, n_1)}(\mathsf{P})}(x^{n_1}) = \frac{\mathbbm{1}(x^{n_1} \in \mathcal{S}^{(n_2, n_1)}(\mathsf{P}))}{\mathsf{S}_{n_2}(\sqrt{n_2\mathsf{P}})\mathsf{S}_{n_1-n_2}(\sqrt{(n_1-n_2)\mathsf{P}})}. \label{eq:unif_power_shell}
	\end{align}

Now, we can bound the Radon-Nikodym derivative that results from the change of measure from $P^{\mathrm{Unif}}_{\mathcal{S}^{(n_2, n_1)}(\mathsf{P})}(x^{n_1})$ to the i.i.d. Gaussian output distribution.
	
\begin{lemma}\label{lemma:radon_nikodym_concatenated}
	Let $P_{Y^{n_1}}(y^{n_1})$ by the output distribution induced by $P^{\mathrm{Unif}}_{\mathcal{S}^{(n_2, n_1)}(\mathsf{P})}(x^{n_1})$ defined in \eqref{eq:unif_power_shell} and let $Q_{Y^{n_1}}(y^{n_1}) = \mathcal{N}(y^{n_1}; \mathbf{0}^{n_1}, (1+\mathsf{P})\cdot \unitmatrix{n_1})$. Then, for $n_1$ sufficiently large,
	\begin{align}
		\frac{dP_{Y^{n_1}}(y^{n_1})}{dQ_{Y^{n_1}}(y^{n_1})} \leq \tilde{\mathsf{K}} := 729\cdot \frac{\pi}{8}\cdot\frac{(1+\mathsf{P})^2}{1+2\mathsf{P}}.
	\end{align}
\end{lemma}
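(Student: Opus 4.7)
The plan is to reduce the composite-shell bound to a product of two standard single-shell bounds by exploiting the independence built into $P^{\mathrm{Unif}}_{\mathcal{S}^{(n_2,n_1)}(\mathsf{P})}$. From \eqref{eq:unif_power_shell} the composite-shell distribution factorizes as $P_{X^{n_1}}(x^{n_1}) = P_{X^{n_2}}(x^{n_2})\,P_{X^{n_1-n_2}}(x^{n_1-n_2})$, where each factor is the uniform distribution on a standard power shell as given by \eqref{eq:shell_distribution_firstpart} and \eqref{eq:shell_distribution_secondpart}. Since the channel is memoryless and the noise on the two sub-blocks is independent, the induced output distribution also factorizes, namely $P_{Y^{n_1}}(y^{n_1}) = P_{Y^{n_2}}(y^{n_2})\,P_{Y^{n_1-n_2}}(y^{n_1-n_2})$, and the reference measure $Q_{Y^{n_1}}$ obviously factorizes in the same way because it is i.i.d. Gaussian.

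Given these product structures, the Radon--Nikodym derivative factorizes multiplicatively:
\begin{IEEEeqnarray}{rCl}
\frac{dP_{Y^{n_1}}(y^{n_1})}{dQ_{Y^{n_1}}(y^{n_1})}
&=& \frac{dP_{Y^{n_2}}(y^{n_2})}{dQ_{Y^{n_2}}(y^{n_2})}\cdot\frac{dP_{Y^{n_1-n_2}}(y^{n_1-n_2})}{dQ_{Y^{n_1-n_2}}(y^{n_1-n_2})}.
\end{IEEEeqnarray}
The next step is to invoke the standard single-shell bound for the Radon--Nikodym derivative of the power-shell output relative to the i.i.d.\ Gaussian output, which (for sufficiently large blocklength) has the form $\mathsf{K} = 27\sqrt{\tfrac{\pi}{8}\cdot\tfrac{(1+\mathsf{P})^2}{1+2\mathsf{P}}}$ independently of the blocklength. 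Applying this to each factor individually and taking the supremum over the respective output coordinates yields the product bound $\mathsf{K}\cdot\mathsf{K}=\mathsf{K}^2$, which evaluates to exactly $\tilde{\mathsf{K}} = 729\cdot\frac{\pi}{8}\cdot\frac{(1+\mathsf{P})^2}{1+2\mathsf{P}}$, as desired. This is precisely the $m=2$ instance of the $\mathsf{K}^m$ behavior mentioned in the remark after Theorem \ref{theorem:ed_symbols}.

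The main obstacle is ensuring that the single-shell bound applies to \emph{both} sub-blocks simultaneously. Since $n_2 = p n_1$ and $n_1-n_2 = \bar{p}n_1$ with $p\in(0,1)$ fixed, taking $n_1$ sufficiently large makes both $n_2$ and $n_1-n_2$ exceed the threshold required by the single-shell statement; this is exactly the ``$n_1$ sufficiently large'' hypothesis of the lemma. A secondary bookkeeping issue is verifying that the product decomposition of $P_{Y^{n_1}}$ really holds with the two factors being the outputs of independent uniform-shell inputs passed through the same AWGN channel; this is immediate from Fubini but should be stated explicitly because the composite-shell codebook was defined as a single joint object in $\mathbb{R}^{n_1}$ rather than as a product, so the equivalence of the ``jointly uniform on $\mathcal{S}^{(n_2,n_1)}(\mathsf{P})$'' formulation and the ``independently uniform on each factor shell'' formulation—already noted informally in the paragraph before \eqref{eq:shell_distribution_firstpart}—needs to be the first line of the proof.
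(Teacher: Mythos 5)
Your argument is correct and is essentially the same route as the paper's, organized more cleanly. The paper also relies on the shell output bound $P_{Y^n}(y^n)\le P_{\mathrm{out,shell}}(n,t)$ from \cite{MolavianJazi.2015}, factors both $P_{Y^{n_1}}$ and $Q_{Y^{n_1}}$ across the two sub-blocks, and then jointly minimizes the resulting exponent $f(\mathsf{P},t_{\firstpart},t_{\secondpart},p)$; because that exponent is a sum of a function of $t_{\firstpart}$ alone and a function of $t_{\secondpart}$ alone, the joint optimization decouples into two single-shell optimizations, which is precisely what you carry out when you factor $dP_{Y^{n_1}}/dQ_{Y^{n_1}}$ as a product of two sub-block Radon--Nikodym derivatives and bound each by the single-shell constant $\mathsf{K}=c\,(1+\mathsf{P})/\sqrt{1+2\mathsf{P}}$ with $c=27\sqrt{\pi/8}$. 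The only thing to make explicit is that this single-shell value of $\mathsf{K}$ (for $n$ large enough) must itself be justified via the same \cite{MolavianJazi.2015} estimate and the one-variable optimization at $t=1+\mathsf{P}$; if you cite it as a known fact rather than rederiving it, your proof is a genuinely shorter presentation of the same computation, and it correctly reduces the ``$n_1$ sufficiently large'' hypothesis to both $n_2=pn_1$ and $n_1-n_2=\bar p n_1$ exceeding the single-shell threshold.
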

For the proof of Lemma \ref{lemma:radon_nikodym_concatenated}, please refer to Appendix \ref{sec:rn-lemma-proof}. We can now analyze the performance of the coding scheme:

\begin{enumerate}
    \item Codebook generation:
For every message $m_1 \in \mathcal{M}_1$, generate a codeword according to the uniform distribution on the composite power shell:
\begin{IEEEeqnarray}{rCl}
	X_1^{n_1}(m_1) \sim P^{\mathrm{Unif}}_{\mathcal{S}^{(n_2, n_1)}(\mathsf{P}_1)}.
\end{IEEEeqnarray}
For every message $m_2 \in \mathcal{M}_2$, generate a codeword according to the i.i.d. Gaussian distribution:
\begin{IEEEeqnarray}{rCl}
	X_{2,i}(m_2) \sim \mathcal{N}(0, \bar{\mathsf{P}}_2), \qquad i = 1, ..., n_2,
\end{IEEEeqnarray}
where $\bar{\mathsf{P}}_2 = \mathsf{P}_2 - \delta$.
\item Encoding: 
Superposition coding: to send the message pair $(m_1, m_2)$, transmit $x^{n_1} = x_1^{n_1}(m_1) + [x_2^{n_2}(m_2), \mathbf{0}^{n_1-n_2}]$.

\item Decoding: In the following, we use the notation $x_1^{n_2} = [x_{1,1}, ..., x_{1, n_2}]$ and $x_2^{n_1} = [x_2^{n_2}, \mathbf{0}^{n_1-n_2}]$. User $k$ receives the signal
\begin{IEEEeqnarray}{rCl}
	y_k^{n_k} &=& \sqrt{h_k}x_1^{n_k} + \sqrt{h_k}x_2^{n_k} + z_k^{n_k}.
\end{IEEEeqnarray}
User 2 performs successive interference cancellation (SIC), while user 1 treats the interference as noise. Both decoders use the threshold decoding technique from the modified DT bound (Theorem \ref{theorem:dt-modified}), where we use the threshold $\gamma_{n_1} = \mathsf{K}_{n_1} \mathsf{M}_1$ for user 1, $\gamma_{n_2} = \mathsf{K}_{n_2} \mathsf{M}_1$ for user 2 in the first SIC phase, and $\gamma_{n_2} = \mathsf{K}_{n_2} \mathsf{M}_2$ for user 2 in the second SIC phase. 
\begin{itemize}
	\item User 1 treats user 2's codeword as noise and decodes $\hat{m}_1$ as the smallest message s.t. $i(x_1^{n_1}(\hat{m}_1), y_1^{n_1}) \geq \log (\mathsf{K}_{n_1} \mathsf{M}_1)$. If there is no such message, an error is declared. 
	\item User 2 uses successive interference cancellation (SIC) with early decoding.
	\begin{enumerate}
		\item User 2 treats his own codeword as noise and decodes $\hat{m}_1$ as the smallest message s.t. $i\left(x_1^{n_2}(\hat{m}_1), y_2^{n_2}\right) \geq \log (\mathsf{K}_{n_2} \mathsf{M}_1)$ using early decoding. If there is no such message, an error is declared. 
		\item User 2 subtracts $x_1^{n_2}(\hat{m}_1)$ from $y_2^{n_2}$ to obtain
		\begin{IEEEeqnarray}{rCl}
			\tilde{y}_2^{n_2} &=& \sqrt{h_2}x_2^{n_2} + z_2^{n_2}.
		\end{IEEEeqnarray}
		Then, he decodes $\hat{m}_2$ as the smallest message s.t. $i\left(x_2^{n_2}(\hat{m}_2), \tilde{y}_2^{n_2}\right) \geq \log (\mathsf{K}_{n_2} \mathsf{M}_2)$.
	\end{enumerate}
\end{itemize}

\item Error analysis: For Theorem \ref{theorem:ed_symbols}, only the first SIC step at user 2 is of interest. Since we treat user 2's codeword as noise, we have the effective channel
\begin{IEEEeqnarray}{rCl}
	Y_2^{n_2} &=& \sqrt{h_2}X_1^{n_2} + (\sqrt{h_2}X_2^{n_2} + Z_2^{n_2}),
\end{IEEEeqnarray}
with the noise
\begin{IEEEeqnarray}{rCl}
 	\sqrt{h_2}X_{2,i} + Z_{2,i} &\sim& \mathcal{N}(0, 1+ h_2\bar{\mathsf{P}}_2), \qquad i = 1,...n_2.
\end{IEEEeqnarray}
By normalizing the noise power to 1, we get
\begin{IEEEeqnarray}{rCl}
	Y_2^{n_2} &=& \sqrt{g_2}X_1^{n_2} + \tilde{Z}_2^{n_2},
\end{IEEEeqnarray}
where $\tilde{Z}_{2,i} \sim \mathcal{N}(0,1)$ is i.i.d., $\tilde{Z}_2^{n_2} \indep X^{n_2}$ and we define
\begin{IEEEeqnarray}{rCl}
	g_2 &:=& \frac{h_2}{1+ h_2\bar{\mathsf{P}}_2}.
\end{IEEEeqnarray}
Since $X_1^{n_1}$ is a composite shell codeword, $X_1^{n_2}$ fulfills the equal-power property. From Lemma \ref{lemma:radon_nikodym_concatenated}, we can conclude that $\mathsf{K}_{n_2} = \tilde{\mathsf{K}}$. We use the modified information density 
\begin{align}
	\tilde{i}(X^n; Y^n) := \log \frac{dP_{Y^n|X^n}(Y^n|X^n)}{dQ_{Y^n}(Y^n)}, \label{eq:modified_information_density}
\end{align}
where $Q_{Y^n}(Y^n)$ is the i.i.d. Gaussian output distribution. The expectation and variance of the conditional information density are as in the standard single-user case:
\begin{IEEEeqnarray}{rCl}
	\mathbb{E}\left[\frac{1}{n_2}\tilde{i}(x_1^{n_2}; Y_2^{n_2})\right] &=& \mathsf{C}(g_2\mathsf{P}_1), \\
	\mathrm{Var}\left[\frac{1}{n_2}\tilde{i}(x_1^{n_2}; Y_2^{n_2})\right] &=& \frac{\mathsf{V}(g_2\mathsf{P}_1)}{n_2}.
\end{IEEEeqnarray}

Then, we can bound the outage and confusion probabilities of the DT bound (Theorem \ref{theorem:dt-modified}):
\begin{IEEEeqnarray}{rCl}
	\Pr\left[\tilde{i}(x^{n_2}; Y^{n_2}) \leq \log \gamma_{n_2}\right] &\leq& Q\left(\frac{ n\mathsf{C}(g_2\mathsf{P}_1) - \log \gamma_{n_2}}{\sqrt{n_2\mathsf{V}(g_2\mathsf{P}_1)}}\right) + 6\frac{\mathsf{B}_{1}(g_2\mathsf{P}_1)}{\sqrt{n_2}}, \\
	\Pr\left[\tilde{i}(x^{n_2};\bar{Y}^{n_2}) > \log \gamma_{n_2}\right] &\leq& \frac{\mathsf{B}_{2}(g_2\mathsf{P}_1)}{\sqrt{n_2}\gamma_{n_2}},
\end{IEEEeqnarray}
and therefore, we get with $\gamma_{n_2} = \tilde{\mathsf{K}}\mathsf{M}$,
\begin{IEEEeqnarray}{rCl}
	\epsilon_{\mathrm{SIC},1}^{(n_2)} &\leq& Q\left(\frac{ n_2\mathsf{C}(g_2\mathsf{P}_1) - \log (\tilde{\mathsf{K}} \mathsf{M}_1)}{\sqrt{n_2\mathsf{V}(\mathsf{P}_1)}}\right) + 6\frac{\mathsf{B}_{1}(g_2\mathsf{P}_1)}{\sqrt{n_2}} + \frac{\mathsf{B}_{2}(g_2\mathsf{P}_1)}{\sqrt{n_2}}\\
	&=& Q\left(\frac{ n_2\mathsf{C}(g_2\mathsf{P}_1) - \log (\tilde{\mathsf{K}} \mathsf{M}_1)}{\sqrt{n_2\mathsf{V}(g_2\mathsf{P}_1)}}\right) + \frac{\mathsf{B}(\mathsf{P}_1)}{\sqrt{n_2}},
\end{IEEEeqnarray}
where we have defined $\mathsf{B}(g_2\mathsf{P}_1) = 6\mathsf{B}_1(g_2\mathsf{P}_1) + \mathsf{B}_2(g_2\mathsf{P}_1).$ By rearranging, we find
\begin{IEEEeqnarray}{rCl}
	\frac{ n_2\mathsf{C}(g_2\mathsf{P}_1) - \log (\mathsf{M}_1)}{\sqrt{n_2\mathsf{V}(g_2\mathsf{P}_1)}} &\geq& Q^{-1}\left(\epsilon_{\mathrm{SIC},1}^{(n_2)} - \frac{\mathsf{B}(g_2\mathsf{P}_1)}{\sqrt{n_2}}\right) + \frac{\log (\tilde{\mathsf{K}})}{\sqrt{n_2\mathsf{V}(g_2\mathsf{P}_1)}}\\
	&=& Q^{-1}\left(\epsilon_{\mathrm{SIC},1}^{(n_2)}\right) + O\left(\frac{1}{\sqrt{n_2}}\right),
\end{IEEEeqnarray}
where the last step follows from the Taylor expansion of the $Q$-function. By dropping the $O(1/\sqrt{n_2})$ term and substituting $m = \sqrt{n_2}$, we get the quadratic equation
\begin{IEEEeqnarray}{rCl}
	m^2 - \frac{\sqrt{\mathsf{V}(g_2\mathsf{P}_1)}Q^{-1}\left(\epsilon_{\mathrm{SIC},1}^{(n_2)}\right)}{\mathsf{C}(g_2\mathsf{P}_1)}m - \frac{\log (\mathsf{M}_1)}{{\mathsf{C}(g_2\mathsf{P}_1)}} &\geq& 0, \label{eq:n_ed_quadratic}
\end{IEEEeqnarray}
which has the roots
\begin{IEEEeqnarray}{rCl}
	m_{1,2} &=& \frac{\sqrt{\mathsf{V}(g_2\mathsf{P}_1)}Q^{-1}\left(\epsilon_{\mathrm{SIC},1}^{(n_2)}\right)}{2\mathsf{C}(g_2\mathsf{P}_1)} \pm \sqrt{\frac{\mathsf{V}(g_2\mathsf{P}_1)Q^{-1}\left(\epsilon_{\mathrm{SIC},1}^{(n_2)}\right)^2}{4\mathsf{C}^2(g_2\mathsf{P}_1)}+\frac{\log (\mathsf{M}_1)}{{\mathsf{C}(g_2\mathsf{P}_1)}}},
\end{IEEEeqnarray}
where only the solution with the positive sign in front of the square root is positive. Also, note that the polynomial in \eqref{eq:n_ed_quadratic} has a positive sign in front of the leading term, which means that the inequality \eqref{eq:n_ed_quadratic} is fulfilled for all $n_2$ greater than the second root. Therefore, early decoding is possible for 
\begin{IEEEeqnarray}{rCl}
	n_2 &\geq& \left( \frac{\sqrt{\mathsf{V}(g_2\mathsf{P}_1)}Q^{-1}\left(\epsilon_{\mathrm{SIC},1}^{(n_2)}\right)}{2\mathsf{C}(g_2\mathsf{P}_1)} + \sqrt{\frac{\mathsf{V}(\mathsf{P}_1)Q^{-1}\left(\epsilon_{\mathrm{SIC},1}^{(n_2)}\right)^2}{4\mathsf{C}^2(g_2\mathsf{P}_1)}+\frac{\log (\mathsf{M}_1)}{{\mathsf{C}(g_2\mathsf{P}_1)}}}\right)^2.
\end{IEEEeqnarray}
\end{enumerate}

\subsubsection{Achievable Rates}\label{sec:early_decoding_rates_ipc_improved}
First, we analyze the rates at receiver 1. When treating user 2's signal as noise, the effective channel is
\begin{IEEEeqnarray}{rCl}
	Y_1^{n_1} &=& \sqrt{h_1}X_1^{n_1} + (\sqrt{h_1}X_2^{n_1} + Z_1^{n_1}) = \sqrt{h_1}X_1^{n_1} + \tilde{Z}_1^{n_1},
\end{IEEEeqnarray}
where
\begin{IEEEeqnarray}{rCl}
	 \tilde{Z}_{1,i} &\sim& \mathcal{N}(0, \sigma_i), \\
	 \sigma_i &:=& \begin{cases}
	 	1+ h_1\bar{\mathsf{P}}_2, & i = 1, ..., n_2,\\
	 	1, & i = n_2+1, ..., n_1.
	 \end{cases}
\end{IEEEeqnarray}
As the reference in the modified information density, we use the Gaussian output distribution 
\begin{IEEEeqnarray}{rCl}
	Y_{1,i} &\sim& \mathcal{N}\left(0, h_1\mathsf{P}_1 + \sigma_i\right).
\end{IEEEeqnarray}
In \cite[eq. (109)]{Lin.2021b}, the conditional modified information density for this scenario has been derived as 
\begin{IEEEeqnarray}{rCrl}
	\tilde{i}\left(x^{n_1}, Y^{n_1}\right) &=& n_1 \bar{\mathsf{C}}_1 &+ \frac{\log e}{2(1+h_1(\mathsf{P}_1+\bar{\mathsf{P}}_2))}\left[\sum_{i = 1}^{n_2}(h_1x_{1,i}^2 + 2 \sqrt{h_1}x_{1,i}\tilde{Z}_{1,i}) - g_1\mathsf{P}_1\sum_{i=1}^{n_2}\tilde{Z}_{1,i}^2\right]\nonumber\\*
	&&& +\frac{\log e}{2(1+h_1\mathsf{P}_1)}\left[\sum_{i = n_2+1}^{n_1}(h_1x_{1,i}^2 + 2 \sqrt{h_1}x_{1,i}\tilde{Z}_{1,i}) - h_1\mathsf{P}_1\sum_{i=n_2+1}^{n_1}\tilde{Z}_{1,i}^2\right], \nonumber\\\label{eq:info_density_rx_1}
\end{IEEEeqnarray}
where $g_1 = \frac{h_1}{1+h_1\bar{\mathsf{P}}_2}$ and
\begin{IEEEeqnarray}{rCl}
	\bar{\mathsf{C}}_1 &:=& p\mathsf{C}(g_1\bar{\mathsf{P}}_1) + (1-p)\mathsf{C}(h_1\bar{\mathsf{P}}_1).
\end{IEEEeqnarray}
In contrast to \cite{Lin.2021b}, we use composite shell codewords instead of i.i.d. Gaussian codewords, so we can further simplify \eqref{eq:info_density_rx_1} as
\begin{IEEEeqnarray}{rCrl}
	\tilde{i}\left(x^{n_1}, Y^{n_1}\right) &=& n_1 \bar{\mathsf{C}}_1 &+ \frac{\log e}{2(1+h_1(\mathsf{P}_1+\bar{\mathsf{P}}_2))}\sum_{i = 1}^{n_2}\left[h_1\mathsf{P}_1+ 2 \sqrt{h_1}x_{1,i}\tilde{Z}_{1,i} - g_1\mathsf{P}_1\tilde{Z}_{1,i}^2\right]\nonumber\\
	&&& +\frac{\log e}{2(1+h_1\mathsf{P}_1)}\sum_{i = n_2+1}^{n_1}\left[h_1\mathsf{P}_1+ 2\sqrt{h_1}x_{1,i}\tilde{Z}_{1,i} - h_1\mathsf{P}_1\tilde{Z}_{1,i}^2\right]. \label{eq:rx1_info_density}
\end{IEEEeqnarray}
 The expectation of \eqref{eq:rx1_info_density} is
\begin{IEEEeqnarray}{rCl}
	\mathbb{E}\left[\tilde{i}\left(x^{n_1}, Y^{n_1}\right)\right]  &=& n_1\bar{\mathsf{C}}_1
\end{IEEEeqnarray}
and the variance is
\begin{IEEEeqnarray}{rCl}
	\mathrm{Var}\left[\tilde{i}\left(x^{n_1}, Y^{n_1}\right)\right] 
	&=& n_2 \mathsf{V}(g_1\mathsf{P}_1) + (n_1-n_2)\mathsf{V}(h_1\mathsf{P}_1).
\end{IEEEeqnarray}
Now, we can use the usual steps to bound the outage and confusion probability in the DT bound and get the rates from Theorem \ref{theorem:ed_symbols}.

The rate for user 2 is the standard single-user rate using i.i.d. Gaussian codebooks, since after the successful SIC, user 2 is interference free.

\subsection{Proof of Lemma \ref{lemma:radon_nikodym_concatenated}}\label{sec:rn-lemma-proof}
In \cite[Appendix B]{MolavianJazi.2015}, it is shown that the output distribution induced by the uniform input distribution on the power shell can be upper bound by
\begin{IEEEeqnarray}{rCl}
	P_{Y^n}(y^n) &\leq& P_{\mathrm{out,shell}}(n, t) := \frac{c}{2}\frac{1 + \sqrt{1+4\mathsf{P}t}}{\sqrt[4]{1+4\mathsf{P}t}}\pi^{-\frac{n}{2}}e^{-n\frac{(1+\mathsf{P})}{2}}e^{-n\frac{t}{2}}e^{n\frac{\sqrt{1+4\mathsf{P}t}}{2}}\left(1 + \sqrt{1+4\mathsf{P}t}\right)^{-\frac{n}{2}}, \nonumber\\
\end{IEEEeqnarray}
where $t := \frac{\|y^n\|^2}{n}$ and $c = 27\sqrt{\frac{\pi}{8}}$. Since in our case, the input consists of the concatenation of two independent shell codewords, and since the channel is memoryless, the output distribution in our case will be the product of the output distributions of the two individual power shell codewords. We define 
\begin{IEEEeqnarray}{c}
	t_{\firstpart} := \frac{\|[y_1, ..., y_{n_2}]\|^2}{n_2},  \qquad t_{\secondpart} := \frac{\|[y_{n_2+1}, ..., y_{n_1}]\|^2}{n_1-n_2}, \qquad \text{and} \qquad t := \frac{\|y^{n_1}\|^2}{n_1}. \IEEEeqnarraynumspace
\end{IEEEeqnarray}
Note that $pt_{\firstpart} + (1-p)t_{\secondpart} = t$, where $p = \frac{n_2}{n_1}$. The output distribution for the composite shell code is
\begin{IEEEeqnarray}{rCl}
	P_{Y^{n_1}}(y^{n_1}) &\leq& P_{\mathrm{out,shell}}(n_2, t_{\firstpart})\cdot P_{\mathrm{out,shell}}(n_1-n_2, t_{\secondpart})\\
	&=&\frac{c^2}{4}
	\frac{1 + \sqrt{1+4\mathsf{P}t_{\firstpart}}}{\sqrt[4]{1+4\mathsf{P}t_{\firstpart}}}
	\frac{1 + \sqrt{1+4\mathsf{P}t_{\secondpart}}}{\sqrt[4]{1+4\mathsf{P}t_{\secondpart}}}
	\pi^{-\frac{n_2}{2}}\pi^{-\frac{n_1-n_2}{2}}
	e^{-n_2\frac{(1+\mathsf{P})}{2}}e^{-(n_1-n_2)\frac{(1+\mathsf{P})}{2}}\nonumber\\
	&&\cdot e^{-n_2\frac{t_{\firstpart}}{2}}e^{-(n_1-n_2)\frac{t_\secondpart}{2}}
	e^{n_2\frac{\sqrt{1+4\mathsf{P}t_{\firstpart}}}{2}}e^{(n_1-n_2)\frac{\sqrt{1+4\mathsf{P}t_{\secondpart}}}{2}} \nonumber\\
	&&\cdot\left(1 + \sqrt{1+4\mathsf{P}t_{\firstpart}}\right)^{-\frac{n_2}{2}}\left(1 + \sqrt{1+4\mathsf{P}t_{\secondpart}}\right)^{-\frac{n_1-n_2}{2}} \\
	&=& \mathsf{K}_1(\mathsf{P}, t_{\firstpart}, t_{\secondpart}) \cdot
	\pi^{-\frac{n_1}{2}}
	e^{-n_1\frac{(1+\mathsf{P})}{2}}
	e^{-n_1\frac{pt_{\firstpart}+ (1-p)t_\secondpart}{2}}
	e^{n_1\left(p\frac{\sqrt{1+4\mathsf{P}t_{\firstpart}}}{2}+(1-p)\frac{\sqrt{1+4\mathsf{P}t_{\secondpart}}}{2}\right)} \nonumber\\
	&&\cdot e^{-\frac{n_1}{2}\left( p\cdot \ln\left(1 + \sqrt{1+4\mathsf{P}t_{\firstpart}}\right) + (1-p)\cdot \ln\left(1 + \sqrt{1+4\mathsf{P}t_{\secondpart}}\right)\right)} \\
	&=& \mathsf{K}_1(\mathsf{P}, t_{\firstpart}, t_{\secondpart}) \cdot
	\pi^{-\frac{n_1}{2}}
	e^{-\frac{n_1}{2}f_1(\mathsf{P}, t_{\firstpart}, t_{\secondpart}, p)},\label{eq:shell_output_approx_concatenated}
\end{IEEEeqnarray}
where we have defined 
\begin{IEEEeqnarray}{rCl}
	\mathsf{K}_1(\mathsf{P}, t_{\firstpart}, t_{\secondpart}) &:=& \frac{c^2}{4}
	\frac{1 + \sqrt{1+4\mathsf{P}t_{\firstpart}}}{\sqrt[4]{1+4\mathsf{P}t_{\firstpart}}}
	\frac{1 + \sqrt{1+4\mathsf{P}t_{\secondpart}}}{\sqrt[4]{1+4\mathsf{P}t_{\secondpart}}}
\end{IEEEeqnarray}
and
\begin{IEEEeqnarray}{rCl}
	f_1(\mathsf{P}, t_{\firstpart}, t_{\secondpart}, p) &:=& (1+\mathsf{P}) + pt_{\firstpart}+ (1-p)t_\secondpart - p\sqrt{1+4\mathsf{P}t_{\firstpart}} - (1-p)\sqrt{1+4\mathsf{P}t_{\secondpart}} \nonumber\\*
	&&+  p\cdot \ln\left(1 + \sqrt{1+4\mathsf{P}t_{\firstpart}}\right) + (1-p)\cdot \ln\left(1 + \sqrt{1+4\mathsf{P}t_{\secondpart}}\right).
\end{IEEEeqnarray}

The output distribution induced by i.i.d. Gaussian inputs is
\begin{IEEEeqnarray}{rCl}
	Q_{Y^{n_1}}(y^{n_1}) &=& (2\pi)^{-\frac{n_1}{2}} (1+\mathsf{P})^{-\frac{n_1}{2}} e^{-n_1\frac{t}{2(1+\mathsf{P})}} \nonumber \\
	&=& \pi^{-\frac{n_1}{2}} e^{-\frac{n_1}{2}f_2(\mathsf{P}, t)}, \label{eq:gaussian_output}
\end{IEEEeqnarray}
where
\begin{IEEEeqnarray}{rCl}
	f_2(\mathsf{P}, t) &:=& \ln\left(2(1+\mathsf{P})\right) + \frac{t}{(1+\mathsf{P})}.
\end{IEEEeqnarray}
We obtain the ratio of output distributions by dividing \eqref{eq:shell_output_approx_concatenated} by \eqref{eq:gaussian_output}:
\begin{IEEEeqnarray}{rCl}
	\frac{dP_{Y^{n_1}}(y^{n_1})}{dQ_{Y^{n_1}}(y^{n_1})} &\leq& \mathsf{K}_1(\mathsf{P}, t_{\firstpart}, t_{\secondpart}) \cdot e^{-\frac{n_1}{2}\left(f_1(\mathsf{P}, t_{\firstpart}, t_{\secondpart}, p) - f_2(\mathsf{P}, t)\right)}. \label{eq:output_ratio}
\end{IEEEeqnarray}

Since we are looking for an upper bound of the ratio \eqref{eq:output_ratio} for the worst case, we are finding the realization of $y^{n_1}$ (or, equivalently, $t_{\firstpart}$ and  $t_{\secondpart}$) that maximizes \eqref{eq:output_ratio}.

First, we investigate the exponent in \eqref{eq:output_ratio}, which we want to minimize in order to maximize \eqref{eq:output_ratio}:
\begin{IEEEeqnarray}{rCl}
	f(\mathsf{P}, t_{\firstpart}, t_{\secondpart}, p) &:=& f_1(\mathsf{P}, t_{\firstpart}, t_{\secondpart}, p) - f_2(\mathsf{P}, t) \\
	&=& (1+\mathsf{P}) + pt_{\firstpart}+ (1-p)t_\secondpart - p\sqrt{1+4\mathsf{P}t_{\firstpart}}-(1-p)\sqrt{1+4\mathsf{P}t_{\secondpart}} \nonumber\\
	&&+  p\cdot \ln\left(1 + \sqrt{1+4\mathsf{P}t_{\firstpart}}\right) + (1-p)\cdot \ln\left(1 + \sqrt{1+4\mathsf{P}t_{\secondpart}}\right) \nonumber\\
	&&- \ln\left(2(1+\mathsf{P})\right) - \frac{t}{(1+\mathsf{P})} \\
	&=& (1+\mathsf{P}) - \ln\left(2(1+\mathsf{P})\right) + \frac{(pt_{\firstpart}+ (1-p)t_\secondpart)\mathsf{P}}{(1+\mathsf{P})} - p\sqrt{1+4\mathsf{P}t_{\firstpart}} \nonumber\\
	&&-(1-p)\sqrt{1+4\mathsf{P}t_{\secondpart}} +  p\cdot \ln\left(1 + \sqrt{1+4\mathsf{P}t_{\firstpart}}\right) \nonumber\\
	&&+ (1-p)\cdot \ln\left(1 + \sqrt{1+4\mathsf{P}t_{\secondpart}}\right), \label{eq:exponent_function_rn}
\end{IEEEeqnarray}
where the last step follows from $t = pt_{\firstpart} + (1-p)t_{\secondpart}$. Its derivative w.r.t. $t_{\firstpart}$ is
\begin{IEEEeqnarray}{rCl}
	\frac{df(\mathsf{P}, t_{\firstpart}, t_{\secondpart}, p)}{dt_{\firstpart}} &=& 
	\frac{p\mathsf{P}}{(1+\mathsf{P})} -
	\frac{p\cdot4\mathsf{P}}{2\sqrt{1+4\mathsf{P}t_{\firstpart}}} +  \frac{p\cdot4\mathsf{P}}{(2\sqrt{1+4\mathsf{P}t_{\firstpart}})(1 + \sqrt{1+4\mathsf{P}t_{\firstpart}})},
\end{IEEEeqnarray}
which has exactly one zero at $t_{\firstpart} = 1+\mathsf{P}$. It can also be shown that 
\begin{IEEEeqnarray}{rCl}
 	\left.\frac{d^2f(\mathsf{P}, t_{\firstpart}, t_{\secondpart}, p)}{dt_{\firstpart}^2}\right|_{t_{\firstpart} = 1+\mathsf{P}} &>& 0,
\end{IEEEeqnarray}
which means that $t_{\firstpart} = 1+\mathsf{P}$ is the minimizes $f(\mathsf{P}, t_{\firstpart}, t_{\secondpart}, p)$. Since $f(\mathsf{P}, t_{\firstpart}, t_{\secondpart}, p)$ is symmetric in its arguments $t_{\firstpart}$ and $t_{\secondpart}$, the global minimum w.r.t. $t_{\secondpart}$ is also $t_{\secondpart} = 1+\mathsf{P}$. Inserting these values of $t_{\firstpart}$ and $t_{\secondpart}$ into \eqref{eq:exponent_function_rn} yields 
%TODO add derivative df/dt_II
\begin{IEEEeqnarray}{rCl}
	f(\mathsf{P}, t_{\firstpart} = 1+\mathsf{P}, t_{\secondpart} = 1+\mathsf{P}, p) &=& 0.
\end{IEEEeqnarray}
For $n_1$ sufficiently large%\footnote{Even though $\mathsf{K}_1(\mathsf{P}, t_{\firstpart}, t_{\secondpart})$ is monotonically increasing with $t_{\firstpart}$ and $t_{\secondpart}$, for sufficiently large $n_1$, inserting $t_{\firstpart}  = t_{\secondpart} = 1+ \mathsf{P}$ leads to a maximum. For other values of $t_{\firstpart}$ and $t_{\secondpart}$, the exponent would not be zero and with increasing $n_1$, the value of $\mathsf{K}_1(\mathsf{P}, t_{\firstpart}, t_{\secondpart})$ will become very small. Only for $t_{\firstpart}  = t_{\secondpart} = 1+ \mathsf{P}$, the exponent is zero and the overall value of $\mathsf{K}_1(\mathsf{P}, t_{\firstpart}, t_{\secondpart})$ will not get smaller with increasing $n_1$, so we get an upper bound. The same approach is taken (without an explicit explanation) in \cite[Appendix A.3]{MolavianJazi.2014} and \cite[Appendix B]{MolavianJazi.2015}.}
, this leads to 
\begin{IEEEeqnarray}{rCl}
	\frac{dP_{Y^{n_1}}(y^{n_1})}{dQ_{Y^{n_1}}(y^{n_1})} &\leq& \mathsf{K}_1(\mathsf{P}, t_{\firstpart} = 1+\mathsf{P}, t_{\secondpart} = 1+\mathsf{P}) \\
	&=& c^2\frac{(1+\mathsf{P})^2}{1+2\mathsf{P}}.
\end{IEEEeqnarray}

\bibliographystyle{IEEEtran}
\bibliography{references}

% Generated by IEEEtran.bst, version: 1.14 (2015/08/26)
\begin{thebibliography}{10}
\providecommand{\url}[1]{#1}
\csname url@samestyle\endcsname
\providecommand{\newblock}{\relax}
\providecommand{\bibinfo}[2]{#2}
\providecommand{\BIBentrySTDinterwordspacing}{\spaceskip=0pt\relax}
\providecommand{\BIBentryALTinterwordstretchfactor}{4}
\providecommand{\BIBentryALTinterwordspacing}{\spaceskip=\fontdimen2\font plus
\BIBentryALTinterwordstretchfactor\fontdimen3\font minus
  \fontdimen4\font\relax}
\providecommand{\BIBforeignlanguage}[2]{{%
\expandafter\ifx\csname l@#1\endcsname\relax
\typeout{** WARNING: IEEEtran.bst: No hyphenation pattern has been}%
\typeout{** loaded for the language `#1'. Using the pattern for}%
\typeout{** the default language instead.}%
\else
\language=\csname l@#1\endcsname
\fi
#2}}
\providecommand{\BIBdecl}{\relax}
\BIBdecl

\bibitem{Polyanskiy.2010}
Y.~Polyanskiy, H.~V. Poor, and S.~Verdu, ``Channel coding rate in the finite
  blocklength regime,'' \emph{IEEE Trans. Inf. Theory}, vol.~56, no.~5, pp.
  2307--2359, 2010.

\bibitem{MolavianJazi.2015}
E.~MolavianJazi and J.~N. Laneman, ``A second-order achievable rate region for
  {Gaussian} multi-access channels via a central limit theorem for functions,''
  \emph{IEEE Trans. Inf. Theory}, vol.~61, no.~12, pp. 6719--6733, 2015.

\bibitem{Scarlett.2015}
J.~Scarlett and V.~Y.~F. Tan, ``Second-order asymptotics for the {Gaussian}
  {MAC} with degraded message sets,'' \emph{IEEE Trans. Inf. Theory}, vol.~61,
  no.~12, pp. 6700--6718, 2015.

\bibitem{Tan.2014}
V.~Y.~F. Tan and O.~Kosut, ``On the dispersions of three network information
  theory problems,'' \emph{IEEE Trans. Inf. Theory}, vol.~60, no.~2, pp.
  881--903, 2014.

\bibitem{Unsal.2017}
A.~Unsal and J.-M. Gorce, ``The dispersion of superposition coding for
  {Gaussian} broadcast channels,'' in \emph{2017 IEEE Information Theory
  Workshop (ITW)}, 11 2017, pp. 414--418.

\bibitem{Sheldon.2021}
P.~Sheldon, D.~Tuninetti, and B.~Smida, ``The {Gaussian} broadcast channels
  with a hard deadline and a global reliability constraint,'' in \emph{ICC 2021
  - IEEE International Conference on Communications}, 2021, pp. 1--6.

\bibitem{Scarlett.2017}
J.~Scarlett, V.~Y.~F. Tan, and G.~Durisi, ``The dispersion of nearest-neighbor
  decoding for additive non-{Gaussian} channels,'' \emph{IEEE Trans. Inf.
  Theory}, vol.~63, no.~1, pp. 81--92, 2017.

\bibitem{Tuninetti.2018}
D.~Tuninetti, B.~Smida, N.~Devroye, and H.~Seferoglu, ``Scheduling on the
  {Gaussian} broadcast channel with hard deadlines,'' in \emph{2018 IEEE
  International Conference on Communications (ICC)}, 2018, pp. 1--7.

\bibitem{Xu.2020}
Y.~Xu, C.~Shen, T.-H. Chang, S.-C. Lin, Y.~Zhao, and G.~Zhu, ``Transmission
  energy minimization for heterogeneous low-latency {NOMA} downlink,''
  \emph{IEEE Transactions on Wireless Communications}, vol.~19, no.~2, pp.
  1054--1069, 2020.

\bibitem{Lin.2021}
P.-H. Lin, S.-C. Lin, and E.~A. Jorswieck, ``Early decoding for {Gaussian}
  broadcast channels with heterogeneous blocklength constraints,'' in
  \emph{2021 IEEE International Symposium on Information Theory (ISIT)}, 2021,
  pp. 3243--3248.

\bibitem{Lin.2021b}
\BIBentryALTinterwordspacing
P.-H. Lin, S.-C. Lin, P.-W. Chen, M.~Mross, and E.~A. Jorswieck, ``{Gaussian}
  broadcast channels in heterogeneous blocklength constrained networks,'' 2021.
  [Online]. Available: \url{https://arxiv.org/abs/2109.07767}
\BIBentrySTDinterwordspacing

\bibitem{Lin.2022}
------, ``Rate region of {Gaussian} broadcast channels with heterogeneous
  blocklength constraints,'' in \emph{ICC 2022 -- IEEE International Conference
  on Communications}, 2022, accepted (to appear).

\bibitem{Azarian.2005}
K.~Azarian, H.~El~Gamal, and P.~Schniter, ``On the achievable
  diversity-multiplexing tradeoff in half-duplex cooperative channels,''
  \emph{IEEE Trans. Inf. Theory}, vol.~51, no.~12, pp. 4152--4172, 2005.

\bibitem{Sato.1978}
H.~Sato, ``An outer bound to the capacity region of broadcast channels
  (corresp.),'' \emph{IEEE Trans. Inf. Theory}, vol.~24, no.~3, pp. 374--377,
  1978.

\bibitem{Hayashi.2009}
M.~Hayashi, ``Information spectrum approach to second-order coding rate in
  channel coding,'' \emph{IEEE Trans. Inf. Theory}, vol.~55, no.~11, pp.
  4947--4966, 2009.

\bibitem{ElGamal.2011}
A.~El~Gamal and Y.-H. Kim, \emph{Network Information Theory}.\hskip 1em plus
  0.5em minus 0.4em\relax Cambridge University Press, 2011.

\bibitem{MolavianJazi.2014}
E.~MolavianJazi, ``A unified approach to {Gaussian} channels with finite
  blocklength,'' Ph.D. dissertation, University of Notre Dame, 2014.

\end{thebibliography}

\end{document}